\newtheorem{theorem}{Theorem}
\newtheorem{lemma}{Lemma}
\newtheorem{corollary}{Corollary}
\DeclareMathOperator*{\argmin}{arg\,min}
\newcommand\norm[1]{\left\lVert#1\right\rVert}
\def\restrict#1{\raise-.5ex\hbox{\ensuremath|}_{#1}}
\newcommand{\blind}{0}
\begin{document}

\def\spacingset#1{\renewcommand{\baselinestretch}%
{#1}\small\normalsize} \spacingset{1}

%%%%%%%%%%%%%%%%%%%%%%%%%%%%%%%%%%%%%%%%%%%%%%%%%%%%%%%%%%%%%%%%%%%%%%%%%%%%%%

\if0\blind
{
  \title{\bf Sparse-group boosting: \\
Unbiased group and variable selection}
  \author{Fabian Obster\thanks{
    Funding was provided by dtec.bw funded by NextGenerationEU. }\hspace{.2cm}\\
    Department of Business Administration,\\ University of the Bundeswehr Munich, Germany\\
    and \\
    Christian Heumann \\
    Department of Statistics,\\ Ludwig Maximilians University Munich, Germany}
  \maketitle
} \fi

\if1\blind
{
  \bigskip
  \bigskip
  \bigskip
  \begin{center}
    {\LARGE\bf Title}
\end{center}
  \medskip
} \fi

\bigskip
\begin{abstract}
For grouped covariates, we propose a framework for boosting that allows for sparsity within and between groups. By using component-wise and group-wise gradient ridge boosting simultaneously with adjusted degrees of freedom or penalty parameters, a model with similar properties as the sparse group lasso can be fitted through boosting. We show that within-group and between-group sparsity can be controlled by a mixing parameter, and discuss similarities and differences to the mixing parameter in the sparse group lasso. Furthermore, we show under which conditions variable selection on a group or individual variable basis happens and provide selection bounds for the regularization parameters depending solely on the singular values of the design matrix in a boosting iteration of linear Ridge penalized boosting. In special cases, we characterize the selection chance of an individual variable vs. a group of variables through a generalized beta prime distribution. With simulations as well as two real datasets from ecological and organizational research data, we show the effectiveness and predictive competitiveness of this novel estimator. The results suggest that in the presence of grouped variables, sparse group boosting is associated with less biased variable selection and higher predictability compared to component-wise or group-component-wise boosting.
\end{abstract}

\noindent%
{\it Keywords:} group sparsity, degrees of freedom, ridge regression, group-component-wise gradient descent
\vfill

\newpage
\spacingset{1.5} % DON'T change the spacing!
\section{Introduction}
A key task in empirical science in the presence of high-dimensional data is to perform variable selection. Especially if the number of variables is relatively high compared to the number of observations. In biostatistics, this is a common setting, for example, in gene sequencing (\cite{johnstone_statistical_2009}). Two common variable selection strategies are the use of a lasso penalty (\cite{tibshirani_regression_1996}) or component-wise boosting (\cite{breiman_arcing_1998}; \cite{friedman_additive_2000}). Many strategies exist to find a subset of data, including forward selection, backward elimination, or even all-possible subset selection (\cite{chowdhury_variable_2020}), where all possible combinations of variables are considered. Methods differ not only by the selection strategy but also by the metric determining the resulting subset of variables. Some include second-generation p-values (\cite{zuo_variable_2022}) others modified loss functions leading to sparsity through shrinkage like the lasso. Often, the variables in the data can be clustered into groups. These could be pathways of genes or items of a construct in a questionnaire, used, for example, in the social sciences or psychology (\cite{gogol_my_2014}; \cite{agarwal_verifying_2011}). In these cases, it can be of interest to perform variable selection in such a way that this group structure is accounted for. Through the group lasso penalty (\cite{yuan_model_2006}; \cite{meier_group_2008}) and group-wise boosting (\cite{kneib_variable_2009}) this can be achieved. A solution where variable selection is based on groups, as well as variables, can be of interest if one wants to identify important groups as well as important variables within a group or additionally to a group. This can be achieved by the sparse group lasso (\cite{simon_sparse-group_2013}). Most applications of datasets with sparse group structures rely on the utilization of the sparse group lasso penalty in some form, like sparse group quantile regression (\cite{mendez-civieta_adaptive_2021}), sparse group neural networks (\cite{yoon_combined_2017}) and support vector machines (\cite{tang_group_2018}). One exception is sparse group Bayesian regression (\cite{chen_bayesian_2016}). However, to our knowledge, an in-depth analysis of such sparse group variable selection in the context of boosting has not been conducted. Since boosting is a widely utilized machine learning algorithm, a boosting variation that can deal with sparse group structures can offer an alternative modeling approach beyond the sparse group lasso. Having an alternative to the sparse group lasso is especially important since many Machine Learning systems use (sparse group variable) selection methods prior to (\cite{farokhmanesh_deep_2019}) or after (\cite{zhao_heterogeneous_2015}) fitting another machine learning algorithm, leaving the sparse group variable selection algorithm a potential bottleneck for predictive power and interpretability. In this manuscript, we show the issues and potential biases, as well as their correction, occurring in the presence of variable selection between and within groups in the context of boosting. We provide an algorithm for sparse group boosting and discuss its advantages over alternative definitions. Differences and similarities between the sparse group lasso and the sparse group boosting are described with special attention to sparsity. We apply sparse group boosting to an organizational dataset as well as to an agricultural dataset and compare its results to component-wise, group component-wise boosting, and sparse group lasso. The same comparison will be conducted with extensive simulations. The code used for the analysis and figure creation as well as the raw data is available at GitHub (https://github.com/FabianObster/sgb).
%%%%%%%%%%%%%%%%%%%%%%%%%%%%%%%%%%%%%%%%%%%%%%%%%%%%%%
%
% Notation
%
%%%%%%%%%%%%%%%%%%%%%%%%%%%%%%%%%%%%%%%%%%%%%%%%%%%%%%
\subsection{Notation and general setup}
Throughout this article, we consider a (generalized) linear regression framework with outcome $y \in \mathbb{R}^n$, design matrix $X$ consisting of $n$ observations and $p$ variables. The $p$ variables are grouped in $G$ non-overlapping groups, where each group $g \in \{1,...,G\}$ consists of $p_g$ variables. We refer to the $j$-th variable in the $g$-th group as $x^{(g)}_j$ and the sub-matrix containing only the columns belonging to group $g$ is denoted as $X^{(g)}$. If groups are not considered, the group index $g$ is omitted. The same notation also applies to the parameter vector $\beta \in \mathbb{R}^p$ and regularization parameters.

%%%%%%%%%%%%%%%%%%%%%%%%%%%%%%%%%%%%%%%%%%%%%%%%%%%%%%
%
% Sparse group lasso
%
%%%%%%%%%%%%%%%%%%%%%%%%%%%%%%%%%%%%%%%%%%%%%%%%%%%%%%

\subsection{The sparse group Lasso}
In a possibly high-dimensional setting, e.g. $p \gg n$, the sparse group lasso can fit a model that not only performs variable selection on a variable basis but also on a group basis (\cite{simon_sparse-group_2013}).
The sparse group lasso achieves this by combining the group lasso penalty $ \sum_{g=1}^G \sqrt{p_g} \norm{ \beta^{(g)}}_2$ (\cite{yuan_model_2006})
and the lasso penalty $\norm{\beta}_1$ (\cite{tibshirani_regression_1996}) with a mixing parameter $\alpha \in [0,1]$,
\begin{equation}\label{sgl_loss}
    \text{min}_{\beta} \frac{1}{2n} \norm{y- \sum_{g=1}^G X^{(g)}\beta^{(g)}}_2^2 + (1-\alpha)\lambda \sum_{g=1}^G \sqrt{p_g} \norm{ \beta^{(g)}}_2 + \alpha \lambda \norm{\beta}_1.
\end{equation}
There are two tuning parameters: $\alpha$ and $\lambda \geq 0$. The mixing parameter $\alpha$ decides how much we want to penalize the individual variables (increase $\alpha$) versus how much we want to penalize groups (decrease $\alpha$). The special case of $\alpha = 1$ yields the lasso fit, and $\alpha = 0$ yields the group-lasso fit. As in (\cite{simon_sparse-group_2013}), we differentiate between the two types of sparsity: "within-group sparsity" refers to the number of non-zero coefficients within each non-zero group, and "group-wise sparsity" refers to the number of groups with at least one non-zero coefficient. 
Depending on $\alpha$, both types of sparsity can be balanced. This gives the data scientist the flexibility to include secondary knowledge regarding the two types of sparsity. If $\alpha$ is not known before, it has to be estimated, for example, by using cross-validation on a two-dimensional grid for $\lambda$ and $\alpha$. This has the downside that two hyperparameters have to be tuned. \\
The sparse group lasso can also be extended to more general loss functions by replacing the least squared loss with other loss functions. This way, also generalized linear models can be fitted by using the negative log-likelihood $l(\beta)$, with group-wise and within-group sparsity
\begin{equation*}
    \text{min}_{\beta} l(\beta) + (1-\alpha)\lambda \sum_{g=1}^G \sqrt{p_g} \norm{ \beta^{(g)}}_2 + \alpha \lambda \norm{\beta}_1.
\end{equation*}

%%%%%%%%%%%%%%%%%%%%%%%%%%%%%%%%%%%%%%%%%%%%%%%%%%%%%%
%
% Model-Based boosting
%
%%%%%%%%%%%%%%%%%%%%%%%%%%%%%%%%%%%%%%%%%%%%%%%%%%%%%%
\subsection{Model-based boosting}

Another way of fitting sparse regression models is through the method of boosting. The fitting strategy is to continuously improve a given model by adding a base-learner to it. Throughout this article, we refer to a base-learner as a subset of columns of the design matrix associated with a real-valued function.
To enforce sparsity, each base-learner only considers a subset of the variables at each step (\cite{buhlmann_boosting_2007}). In the case of component-wise $\mathcal{L}^2$ boosting, each variable will be a base-learner with a linear link function. In the case of a one-dimensional B-Spline, a base-learner is the design matrix representing the basis functions of the B-Spline with a linear link function.
The goal of boosting in general is to find a real-valued function that minimizes a typically differentiable and convex loss function $l(\cdot,\cdot)$. Here, we will consider the negative log-likelihood as a loss function to estimate $f^*$ as
\begin{equation*}
f^*(\cdot)=\argmin_{f(\cdot)} \mathbb{E}[l(y,f)].
\end{equation*}

\textit{General functional gradient descent Algorithm} (\cite{friedman_greedy_2001})
\begin{enumerate}
\item Define base-learners of the structure $h: \mathbb{R}^{n \times p} \to \mathbb{R}$
\item Initialize $m=0$ and $\widehat{f}^{(0)} \equiv 0$ or $\widehat{f}^{(0)} \equiv \overline{y}$
\item Set $m = m+1$ and compute the negative gradient $\frac{\partial}{\partial f} l(y,f)$ and evaluate it at $\widehat{f}^{[m-1]}$. Doing this yields the pseudo-residuals $u_1,...,u_n$ with
\begin{equation*} 
u_i^{[m]} = \frac{\partial}{\partial f} l(y_i,f)|_{f = \widehat{f}^{[m-1]}},
\end{equation*}
for all $i = 1,..., n$
\item Fit the base-learner  $h$ with the response $(u_1^{[m]},...,u_n^{[m]})$ to the data. This yields $\widehat{h}^{[m]}$, which is an approximation of the negative gradient
\item Update
\begin{equation*}
\widehat{f}^{[m]} = \widehat{f}^{[m-1]} + \eta \cdot \widehat{h}^{[m]}
\end{equation*}
here $\eta$ can be seen as learning rate with $\eta \in ]0,1[$
\item Repeat steps 3,4 and 5 until $m=M$
\end{enumerate}

\textit{Component-wise and group component-wise boosting} \\
In step 4 of the general functional gradient descent algorithm, the function $h$ is applied to the whole data. In the case of component-wise boosting the base procedure is fitted to each variable in the dataset individually. The update in step 5 is then performed only with the base-learner that yields the lowest negative log-likelihood. 

Through early-stopping, or setting $M$ relatively smaller compared to the number of variables in the dataset, and considering the learning rate $\nu$, a sparse overall model can be fitted.

%%%%%%%%%%%%%%%%%%%%%%%%%%%%%%
%
% Boosting Ridge Regression
%
%%%%%%%%%%%%%%%%%%%%%%%%%%%%%%%%%
%https://stats.stackexchange.com/questions/72479/generic-sum-of-gamma-random-variables
\section{Boosting Ridge Regression and preliminary results}
The sparse group boosting as we define it here is based on $\mathcal{L}^2$ regularized regression. Therefore, we first discuss some results for linear Ridge Regression which minimizes $(y-X\beta)^T(y-X\beta)$ with the constraint $\norm{\beta}^2 \leq c$, for a positive constant $c$. Using the Lagrangian form this has an explicit solution $\widehat{\beta}_\lambda = (X^TX+\lambda I)^{-1}X^T y$. We will now discuss results for boosting ridge regression regarding the residual sum of squares (RSS) and the degrees of freedom that will be relevant for the sparse group boosting. Lemma \ref{lemma:RSSridge} allows us to characterize the hat matrix in Ridge Regression using the singular values. The ridge hat matrix will be important to understand the RSS and degrees of freedom, which we need to later define the sparse group boosting and then understand the variable selection mechanism.
\begin{lemma}[Hat matrix in $\mathcal{L}^2$ Ridge Boosting]\label{lemma:RSSridge}
    Consider a design matrix $X \in \mathbb{R}^{n \times p}$ of rank $r \leq p$ with singular value decomposition $X = UDV^{T}$, where $U \in \mathbb{R}^{n\times p}, V \in \mathbb{R}^{p\times p}$ are unitary matrices and $D = \text{diag}(d_1,...,d_r,0,...,0)$ is a diagonal matrix containing the singular values. Let $y \in \mathbb{R}^n$ be the outcome variable and $\widehat{\beta}_{\lambda} = (X^{T}X+\lambda I)^{-1}X^{T}y$ be the Ridge estimate for $\lambda \geq 0$. Then the hat matrix $H_\lambda(m)$ after $m$ boosting steps using a learning rate of $\eta = 1$ is given by
    \begin{equation*}
        H_\lambda(m) = I_n-{(I_n-U\Tilde{D}U^{T})}^{m+1} = \sum_{j=1}^r (1-(1-\tilde{d_j})^{m+1})u_ju_j^{T},
    \end{equation*}
    with $\tilde{D} = \text{diag}(\tilde{d_1},...,\tilde{d_r},0,...,0) =\text{diag}(\frac{d_1^2}{d_1^2+\lambda},...,\frac{d_r^2}{d_r^2+\lambda},0,...,0)$.
\end{lemma}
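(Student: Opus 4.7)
The plan is to prove the claim in three steps: a spectral description of the one-iteration Ridge hat matrix, a telescoping identity for the cumulative boosting hat matrix, and a binomial expansion that collapses to the stated rank-$r$ decomposition.

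First, I would establish that the single-step Ridge hat matrix is $H_\lambda = U\tilde{D}U^T$. Substituting $X=UDV^T$ into $H_\lambda = X(X^TX+\lambda I_p)^{-1}X^T$ and using $V^TV=I_p$ gives
\begin{equation*}
 H_\lambda = UDV^T\cdot V(D^2+\lambda I_p)^{-1}V^T\cdot VDU^T = UD(D^2+\lambda I_p)^{-1}DU^T = U\tilde{D}U^T,
\end{equation*}
since $D$ and $(D^2+\lambda I_p)^{-1}$ are both diagonal. Second, I would show by induction on the iteration counter that with $\eta=1$ the cumulative boosting fit equals $[I_n-(I_n-H_\lambda)^{m+1}]y$. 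The base case follows from $\widehat{f}^{[1]}=H_\lambda y$. For the inductive step, observe that the residual after iteration $k$ is $(I_n-H_\lambda)^k y$, so the next update adds $H_\lambda(I_n-H_\lambda)^k y$; the resulting partial sums form a finite geometric series in $(I_n-H_\lambda)$ which telescopes via the identity $H_\lambda = I_n - (I_n-H_\lambda)$. This yields the intermediate form $H_\lambda(m) = I_n - (I_n-U\tilde{D}U^T)^{m+1}$.

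Third, and most delicately, I would expand $(I_n-U\tilde{D}U^T)^{m+1}$ by the binomial theorem. Because $U^TU=I_p$, every positive power collapses as $(U\tilde{D}U^T)^k = U\tilde{D}^kU^T$, while the zeroth power contributes $I_n$ (which must not be conflated with $UU^T$). Separating out the $k=0$ term and resumming gives
\begin{equation*}
(I_n-U\tilde{D}U^T)^{m+1} = I_n - UU^T + U(I_p-\tilde{D})^{m+1}U^T.
\end{equation*}
Since $I_p-\tilde{D}$ is diagonal with entries $1-\tilde{d}_j$ for $j\le r$ and $1$ for $j>r$, subtracting from $I_n$ cancels the $j>r$ directions and leaves precisely $H_\lambda(m) = \sum_{j=1}^r [1-(1-\tilde{d}_j)^{m+1}]u_j u_j^T$, as claimed.

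The principal obstacle is the bookkeeping in step three. Because $U\in\mathbb{R}^{n\times p}$ has orthonormal columns but is not square when $n>p$, one has $U^TU=I_p$ while $UU^T$ is only the orthogonal projector onto the column space of $X$. A careless binomial expansion that replaces $UU^T$ by $I_n$ would produce a formula with an extra full-rank term; keeping the projector $I_n-UU^T$ explicit is exactly what makes the $n-r$ directions orthogonal to $\mathrm{range}(X)$ drop out of the final hat matrix and forces the sum to run only over the $r$ nonzero singular values.
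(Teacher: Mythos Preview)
Your argument is correct in all three steps: the SVD computation of the single-step hat matrix, the geometric-series/telescoping identity for the cumulative fit, and the binomial bookkeeping that separates the $I_n$ term from $UU^T$ so that the $j>r$ directions vanish. The paper itself does not give a proof of this lemma at all---it simply refers the reader to \cite{tutz_boosting_2007}---so there is no in-paper argument to compare against; your derivation is the standard one and would serve perfectly well as the missing proof. The only cosmetic point worth tightening is the indexing convention: make explicit that the paper's $H_\lambda(m)$ corresponds to $m{+}1$ applications of the base-learner (so that $H_\lambda(0)=H_\lambda$), which your base case $\widehat{f}^{[1]}=H_\lambda y$ already implicitly assumes.
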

A derivation can be found in \cite{tutz_boosting_2007}. Note that the RSS does not depend on the orthogonal matrix $V$. Considering the case of only one boosting step, the hat matrix becomes 
\begin{align*}
    H_\lambda \coloneqq H_\lambda(0) =U\Tilde{D}U^{T} = \sum_{j=1}^r \frac{d_j^2}{d_j^2+\lambda}u_ju_j^{T}.
\end{align*}
For the residual sum of squares, this means
\begin{align}
    RSS(\widehat{\beta}_\lambda) &= (y-X\widehat{\beta}_\lambda)^T(y-X\widehat{\beta}_\lambda) = y^T(I-H_\lambda)^2y \nonumber\\ &= y^Ty -y^T(2H_\lambda- H_\lambda^2)y \nonumber\\ &= y^Ty-y^T(2U\Tilde{D}U^{T}-U\Tilde{D}^2U^{T}) y\nonumber\\ &= y^Ty -y^T\Big(\sum_{j=1}^r \Big[2\frac{d_j^2}{d_j^2+\lambda}-\frac{d_j^4}{(d_j^2+\lambda)^2}\Big]u_ju_j^{T}\Big)  y.\label{equation:svd}
\end{align}
Now, we can introduce the degrees of freedom $\text{df}(\lambda)$, which are either defined as the trace of the hat matrix $\tilde{\text{df}}(\widehat{\beta}_\lambda) = \text{tr}(H_\lambda)$ or as $\text{df}(\lambda) = \text{tr}(2H_\lambda-H_\lambda^TH_\lambda)$. As discussed by \cite{hofner_framework_2011} and apparent in (\ref{equation:svd}), $\text{df}$ has the advantage over $\tilde{\text{df}}$ of being tailored to the $RSS$. It is worth pointing out that regularizing based on $\text{df}$ leads to a greater shrinkage compared to $\tilde{\text{df}}$, because for the same base-learner 
\begin{equation*}
    \text{df}(\lambda) = \tilde{\text{df}} (\tilde{\lambda}) \Leftrightarrow
    \sum_{j=1}^r 2\frac{d_j^2}{d_j^2+\lambda}-\frac{d_j^4}{(d_j^2+\lambda)^2} = \sum_{j=1}^r\frac{d_j^2}{d_j^2+\tilde{\lambda}}
    \Rightarrow \lambda \geq \tilde{\lambda}.
\end{equation*} 
This can be seen by observing that $2x-x^2 \geq x$ for $x \in [0,1]$.
They conclude that the degrees of freedom should be set equal for all base-learner to avoid selection bias, because 
\begin{equation*}
    \mathbb{E}[RSS(\widehat{\beta}_\lambda)-RSS(\widehat{\gamma}_\mu)] = 0 \Leftrightarrow \text{df}(\lambda) = \text{df}(\mu)
\end{equation*}
if the normally distributed random variables $y \sim \mathcal{N}(0,\sigma I)$ is independent of the design matrices $X_\gamma$ and $X_\beta$ in an ordinary linear regression model. 
However, we do want to point out that having the same expectation of RSS does not mean, that there is no variable selection bias. The RSS of $\widehat{\beta}_\lambda$ can still have a different variance, shape or different higher order moments than the RSS of $\widehat{\gamma}_\mu$. In the same setting, the RSS is a quadratic form and can be written as  $y^TQ_1y$ and $y^TQ_2y$ with symmetric and positive definite matrices $Q_1$ and $Q_2$ for two base-learners. Such quadratic forms are generally not independent of each other unless $Q_1Q_2 = 0$ (Craig's theorem). We will later return to the issue of selection bias in the context of sparse group boosting. We will now look at component-wise Ridge Boosting
\begin{corollary}
     Consider a design matrix vector $x \in \mathbb{R}^{n \times 1}$ of rank one with singular value decomposition $x = ud$, where $u = \frac{x}{\sqrt{x^Tx}}$ is the left singular vector and $d = \sqrt{x^Tx} \in R^+$ is the singular value. Let $y \in \mathbb{R}^n$ be the outcome variable and $\widehat{\beta}_{\lambda} = (x^{T}x+\lambda I)^{-1}x^{T}y$ be the Ridge estimate for $\lambda \geq 0$. Then,
     \begin{align*}
     \text{df}(\lambda) =\text{df}(\widehat{\beta}_{\lambda}) &= \Big(2\frac{d^2}{d^2+\lambda}-\frac{d^4}{(d^2+\lambda)^2}\Big), \\
     RSS(\widehat{\beta}_\lambda) &= y^Ty-y^T\text{df}(\widehat{\beta}_{\lambda})uu^Ty,
         \\
     \end{align*}
     and the Ridge parameter $\lambda$ in terms of $\text{df}(\widehat{\beta}_\lambda)$ is given by
     \begin{align*}
     \lambda = \frac{-\Big(\sqrt{-d^4\big(\text{df}(\lambda)-1\big)}+d^2\big(\text{df}(\lambda)-1\big)\Big)}{\text{df}(\lambda)}.
     \end{align*}
\end{corollary}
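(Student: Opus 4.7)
The plan is to specialize Lemma \ref{lemma:RSSridge} to the rank-one, single-boosting-step setting of the corollary. Substituting $r=1$, $m=0$, $U=u$, and $D=d$ into the formula
$H_\lambda(0) = \sum_{j=1}^{r} \frac{d_j^2}{d_j^2+\lambda}\, u_j u_j^{T}$
immediately yields $H_\lambda = \frac{d^2}{d^2+\lambda}\, uu^{T}$. Every other quantity in the statement follows by plugging this rank-one hat matrix into the relevant identity, so the whole argument reduces to a direct computation with no nontrivial analytic step.

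For the degrees of freedom I would use the definition $\text{df}(\lambda) = \text{tr}(2H_\lambda - H_\lambda^{T} H_\lambda)$. Because $u$ is a unit vector, $uu^{T}$ is idempotent, so $(uu^{T})^2 = u(u^{T}u)u^{T} = uu^{T}$ and hence $H_\lambda^{T} H_\lambda = \frac{d^4}{(d^2+\lambda)^2}\, uu^{T}$. Using $\text{tr}(uu^{T}) = u^{T}u = 1$, taking the trace gives exactly the claimed expression for $\text{df}(\lambda)$. The RSS formula is then read off from equation (\ref{equation:svd}) with $r=1$, since the bracketed coefficient there is precisely $\text{df}(\lambda)$, so $RSS(\widehat{\beta}_\lambda) = y^{T}y - y^{T}\,\text{df}(\widehat{\beta}_\lambda)\, uu^{T} y$.

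For the closed-form expression of $\lambda$ in terms of $\text{df}(\widehat{\beta}_\lambda)$, I would make the substitution $t = \frac{d^2}{d^2+\lambda}$, so that $\text{df}(\lambda) = 2t - t^2 = 1 - (1-t)^2$. Solving this quadratic in $t$ gives $t = 1 \pm \sqrt{1-\text{df}(\lambda)}$. Since $\lambda \geq 0$ forces $t \in [0,1]$, only the minus branch $t = 1 - \sqrt{1-\text{df}(\lambda)}$ is admissible. Substituting back into the definition of $t$ and isolating $\lambda$ yields $\lambda = \frac{d^2\sqrt{1-\text{df}(\lambda)}}{1-\sqrt{1-\text{df}(\lambda)}}$, which one rewrites into the displayed form by rationalizing the denominator (multiplying numerator and denominator by $1+\sqrt{1-\text{df}(\lambda)}$ and using $1 - (1-\text{df}(\lambda)) = \text{df}(\lambda)$).

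The only conceptual point, modest as it is, is the branch-selection in the quadratic: one needs to argue that the plus branch $t = 1+\sqrt{1-\text{df}(\lambda)}$ is inadmissible because it would correspond to $t>1$, i.e.\ to $\lambda<0$. Everything else is routine algebra, and no result outside Lemma \ref{lemma:RSSridge}, the definition of $\text{df}$, and equation (\ref{equation:svd}) is required.
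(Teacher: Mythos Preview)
Your proposal is correct and follows exactly the route the paper takes: the paper's justification is simply that the first two displays follow directly from Lemma~\ref{lemma:RSSridge} (specialized to $r=1$) and that the formula for $\lambda$ is obtained by ``solving for $\lambda$ by finding the zeros and the fact that $\text{df}(\lambda)$ is greater than zero.'' Your branch-selection argument (using $\lambda \geq 0$ to force $t=\tfrac{d^2}{d^2+\lambda}\in[0,1]$) is in fact more explicit than the paper's one-line remark.
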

This follows directly from Lemma \ref{lemma:RSSridge} and then solving for $\lambda$ by finding the zeros and the fact that $\text{df}(\lambda)$ is greater than zero.
This corollary seems straightforward but has some useful implications. Generally, in model-based boosting, grid search over $\lambda$ has to be performed to set the degrees of freedom to a fixed value. For individual base-learners one can now compute $\lambda$ directly with a simple formula without having to try a lot of regularization parameters, which increases speed and accuracy. In addition, one does not have to compute the singular value of the Demmler-Reinsch orthogonalization (App. B.1.\cite{carroll_computational_2003}), because the singular values of the design matrix are sufficient in this case. We also see that controlling the variance of a covariate can achieve the same effect as regularization in component-wise boosting. Hence, equalizing the degrees of freedom can be seen as a form of standardization. We will now turn to ridge regression with orthogonal design matrices. In this case, the Ridge estimate is equal to a scaled ordinary least squares estimate $\widehat{\beta}_\lambda = \frac{1}{1+\lambda}\widehat{\beta}_{OLS}$. Orthogonal designs also allow us to characterize the difference between the RSS of Ridge regression and the RSS of the OLS estimate as a Gamma distribution.
\begin{theorem}[Distribution of the difference of RSS in orthogonal Ridge regression]\label{theo:ortho_gamma}
Let $X \in \mathbb{R}^{n \times p}$ be a design matrix with orthonormal columns such that $X^TX = I_p$.
Let $y \in \mathbb{R}^n$ be the outcome variable, $y = \epsilon, \epsilon \sim \mathcal{N}(0,\sigma^2)$ not dependent on the design matrix. Further, assume that the least squares estimate $\widehat{\beta} = X^Ty$ exists and $\widehat{\beta}_\lambda$ is the Ridge estimate for some $\lambda>0$.
Define the difference of residual sums of squares as $\Delta =  \text{RSS}(\widehat{\beta}_\lambda) - \text{RSS}(\widehat{\beta}) = (y- \frac{1}{1+\lambda}X \widehat{\beta})^{T}(y-\frac{1}{1+\lambda}X \widehat{\beta}) - (y-X \widehat{\beta})^{T}(y-X \widehat{\beta})$. Then if $\big(1-\frac{\text{df}(\lambda)}{p}\big)> 0$, $\frac{\Delta}{\sigma^2}$ follows a gamma distribution with the following shape-scale parametrization
\begin{equation*}
    \frac{\Delta}{\sigma^2} \sim \Gamma \Big(\frac{p}{2}, 2(1-\frac{2}{p(1+\lambda)}+\frac{1}{p(1+\lambda)^2})\Big).
\end{equation*}
\end{theorem}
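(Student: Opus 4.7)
The plan is to exploit orthonormality of the columns of $X$ to reduce $\Delta$ to a single quadratic form in $y$, and then recognize that quadratic form as a scaled $\chi^2$ random variable. Because $X^{T}X = I_p$, the Ridge estimate collapses to a scalar rescaling of OLS, $\widehat{\beta}_\lambda = \frac{1}{1+\lambda}\widehat{\beta}$, and the hat matrix $H := XX^{T}$ is an orthogonal projection (in particular, $H^{2}=H$). This is the structural fact that makes all cross-terms collapse.

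The first main step is to expand both residual sums of squares and subtract. Writing $\text{RSS}(\widehat{\beta}) = y^{T}(I-H)y$ and
\begin{equation*}
\text{RSS}(\widehat{\beta}_\lambda) = y^{T}y - \tfrac{2}{1+\lambda}y^{T}Hy + \tfrac{1}{(1+\lambda)^{2}} y^{T}H^{2}y,
\end{equation*}
and then using $H^{2}=H$, I obtain
\begin{equation*}
\Delta \;=\; y^{T}Hy\left(1 - \tfrac{2}{1+\lambda} + \tfrac{1}{(1+\lambda)^{2}}\right) \;=\; \tfrac{\lambda^{2}}{(1+\lambda)^{2}}\, y^{T}XX^{T}y.
\end{equation*}

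The second step is to identify the distribution of $y^{T}XX^{T}y$. Since $y \sim \mathcal{N}(0,\sigma^{2}I_n)$ and $X$ has orthonormal columns, the vector $X^{T}y$ is $\mathcal{N}(0,\sigma^{2}I_p)$, so $\|X^{T}y\|^{2}/\sigma^{2} = y^{T}XX^{T}y/\sigma^{2}$ is $\chi^{2}_{p}$, i.e.\ $\Gamma(p/2,\,2)$ in shape-scale form. Multiplying by the nonrandom constant $\lambda^{2}/(1+\lambda)^{2}$ only rescales the scale parameter, giving $\Delta/\sigma^{2} \sim \Gamma\!\left(p/2,\,2\lambda^{2}/(1+\lambda)^{2}\right)$.

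The final step is to rewrite the scale parameter in the degrees-of-freedom form stated in the theorem. In this orthonormal setting every singular value $d_j$ of $X$ equals one, so from the corollary $\text{df}(\lambda)/p = \frac{2}{1+\lambda} - \frac{1}{(1+\lambda)^{2}}$, and hence $1 - \text{df}(\lambda)/p = \lambda^{2}/(1+\lambda)^{2}$; this is precisely the factor appearing in front of the $\chi^{2}_p$, and the positivity assumption $1-\text{df}(\lambda)/p>0$ just ensures a genuine (nondegenerate) Gamma scale. I anticipate the only subtle point is the algebraic identification $1 - \frac{2}{1+\lambda} + \frac{1}{(1+\lambda)^2} = (1 - \frac{1}{1+\lambda})^{2}$ that makes the cross terms collapse so cleanly; everything else is a direct computation once one observes that $XX^{T}$ is idempotent when $X^{T}X = I_p$.
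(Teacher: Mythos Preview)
Your proposal is correct and follows essentially the same route as the paper: expand the two RSS expressions, use idempotence of $H=XX^{T}$ (from $X^{T}X=I_p$) to collapse everything to the scalar multiple $\big(1-\tfrac{2}{1+\lambda}+\tfrac{1}{(1+\lambda)^2}\big)\,y^{T}XX^{T}y$, then recognize $y^{T}XX^{T}y/\sigma^{2}\sim\chi^{2}_{p}$ and rescale. Your extra observation that the scalar factor is the perfect square $(1-\tfrac{1}{1+\lambda})^{2}=\lambda^{2}/(1+\lambda)^{2}$ is a nice touch that the paper leaves implicit, and it makes the positivity condition $1-\text{df}(\lambda)/p>0$ transparently automatic for $\lambda>0$.
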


%%%%%%%%%%%%%%%%%%%%%%%%%%
%%%%%%%%%%%%%%%%%%%%%%%%%%%%%%%%%%%%%%%%%%%%%%%%%%%%%%
%
% Sparse group boosting
%
%%%%%%%%%%%%%%%%%%%%%%%%%%%%%%%%%%%%%%%%%%%%%%%%%%%%%%
\section{Sparse group boosting}
The goal of this paper is to adapt the concept of the sparse group lasso to the boosting such that it is tailored to the boosting framework. One straightforward idea is to use the whole dataset as base-learner equipped with the sparse group lasso penalty in (\ref{sgl_loss}) and update the global model with each boosting step. However, it is not within the scope of this paper to fit a sparse group lasso model through the utilization of boosting. We rather want to build upon the results from boosting Ridge regression within the framework of group-component-wise boosting. With this approach no Lasso penalty is needed.
As proposed by \cite{hofner_gamboostlss_2014}, one can define one base-learner as a group of variables, as well as an individual variable. Using a similar idea as in the sparse group lasso, we define sparse group boosting.
\subsection{Definition and properties of the sparse group boosting}
Each variable will get its own base-learner, which we will refer to as an individual base-learner, and each group will be one base-learner, containing all variables of the group. For the degrees of freedom of an individual base-learner
$x_j^{(g)} \in \mathbb{R}^{n \times 1}$  we will use
\begin{equation}
     \text{df}(\lambda) =  \alpha,\label{def:sgb_alpha}
\end{equation}
and for the group base-learner we will use
\begin{equation}
     \text{df}(\lambda^{(g)}) = (1-\alpha).
\end{equation}
$\alpha \in ]0,1[$ is the mixing parameter. Since $ \text{df}(\lambda) = 0$ means $\lambda \to \infty$, $\alpha = 1$ yields component-wise boosting, and $\alpha = 0$ yields group boosting.
An alternative definition looking more like the sparse group lasso would be to directly regularize the penalty term instead of the degrees of freedom. In this case the modified loss function L of the unpenalized loss function of the individual base-learner $L^{(g)}_j$ and the one of the group base-learner $L^{(g)}$ become
\begin{align}
   \text{L}^{(g)} = \textit{L}(\beta^{(g)}) + (1-\alpha)\lambda \sqrt{p_g}\norm{\beta^{(g)}}_2 \label{def:sgb_lambda} \\
   \text{L}^{(g)}_j =  \textit{L}(\beta^{(g)}_j) + \alpha \lambda \norm{\beta^{(g)}_j}_2. \\
\end{align}
However, using this definition does not yield either group boosting or component-wise boosting for $\alpha \in \{0,1\}$. This is the case because if we compare the loss function of the regularized base-learner with an unregularized base-learner, it is not guaranteed that the unregularized base-learner has a lower loss. We will see this later in Theorem \ref{th:consistency}.
Still, both definitions have their advantages and disadvantages. Using the degrees of freedom allows us to directly control the expectation of the RSS given normal error terms. In this case $\alpha$ has a natural interpretation and one can set $\alpha$ a priori based on how one wants the RSS of individual base-learners to be compared to the group base-learner. The other advantage of using the degrees of freedom is that one only has to decide on one hyper-parameter, namely $\alpha$. Based on that choice, all other penalty parameters are already determined. Of course, the optimal stopping parameter and the learning rate have to be set in both definitions. There are also advantages of mixing the penalty term. While more tuning is required, there is a greater flexibility of being able to control two parameters independently of each other which may lead to greater predictive power. Controlling the penalty term directly also has the advantage of seeing which combination of $\alpha$ and $\lambda$ leads to either only group or individual variables based on the smallest and biggest non-zero singular value of the design matrix and makes the search more efficient, see Theorem \ref{th:consistency}. In this article, we will mainly focus on the first definition, because of its simplicity, interpretation, and the fact that in boosting the regularization is mainly achieved through the small learning rate and early stopping than finding the optimal regularization parameter as in the sparse group lasso.
\begin{figure}[!ht]
        \centering
        \includegraphics[scale=0.7]{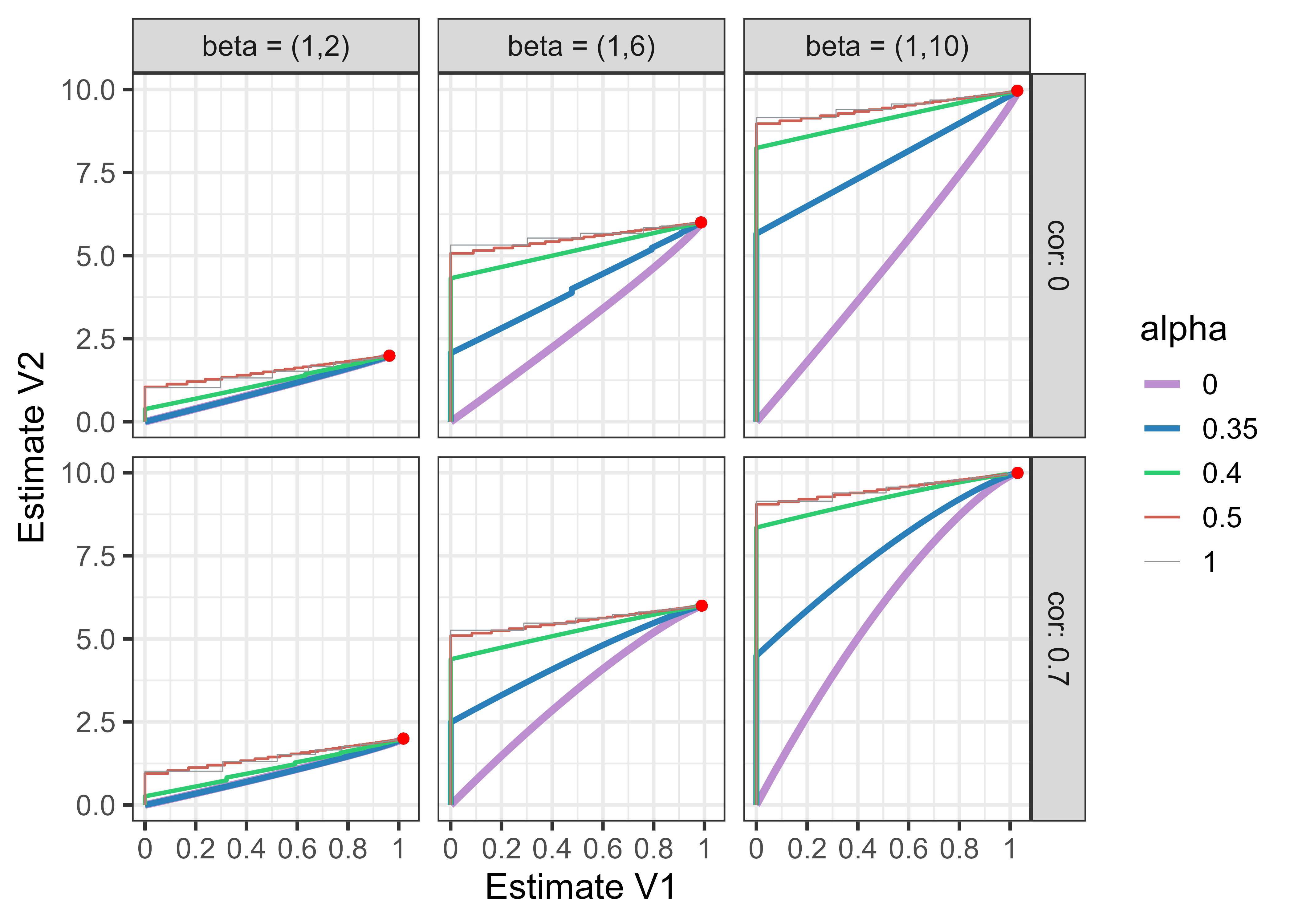}
        \caption{Example: Sparse group boosting parameter estimate paths. Paths throughout 100 boosting iterations for a learning rate of 0.3, depending on the mixing parameter (line-thickness) in an ordinary linear regression model with a normally distributed error term. The point at $\alpha = 1$ indicates the least squares estimate. In the case of $\alpha =0$ group boosting and $\alpha =1$ component-wise boosting was used. }\label{figure:betapaths}
\end{figure}
\FloatBarrier
 Figure \ref{figure:betapaths} displays a two-variable group example of the evolution of the estimates throughout the sparse group boosting process for different mixing parameters based on the degrees of freedom. One group base-learner and two individual base-learners were used. All models move towards the least squares estimate indicated with the point at $\alpha =1$. However, the path they take depends on the mixing parameter. So, in the case of early stopping, different parameter estimates are obtained depending on $\alpha$. One can see that sometimes only the group base-learners are selected, and in some cases, only the individual base-learners are selected, as the path moves only either up or to the right. In some cases, there is an alteration between individual base-learners and the group base-learner. In this example, the Multi-collinearity of the two predictors seems to only slightly affect the selection process, as the upper and lower paths look similar.
 For the sparse group boosting to be a useful method for a general design matrix compared to component-wise boosting and group boosting as separate methods, we show that it is more flexible than just using either of the two.
\begin{theorem}[Selection intervals of the sparse group boosting]\label{th:consistency}
    Consider a design matrix $X \in \mathbb{R}^{n \times p}$ of rank $r \leq p$ with singular value decomposition $X = UDV^{T}$, where $U \in \mathbb{R}^{n\times p}, V \in \mathbb{R}^{p\times p}$ are unitary matrices and $D = \text{diag}(d_1,...,d_r,0,...,0)$ is a diagonal Matrix containing the singular values. Let $y \in \mathbb{R}^n$ be the outcome variable and $\widehat{\beta}_{\mu} = (X^{T}X+\mu I)^{-1}X^{T}y $ be the Ridge estimate for $\mu \geq 0$. For $j \leq p$ let $\widehat{\beta}_{\lambda_j} = (x_j^{T}x_j+\lambda_j)^{-1}x_j^{T}y$ be the estimate for the $j$-th individual base-learner, and $\overline{d}_j$ be the singular value of $x_j$. Denote $\overline{d}^- = \min_{j \leq p} \overline{d}_j^2$ and $\overline{d}^+ = \max_{j \leq p} \overline{d}_j^2$ as well as $d^+ = \max_{j \leq r} d_j^2$ and $d^- = \min_{j \leq r} d_j^2$ accordingly. Then, there are always two mixing parameters $\alpha_1, \alpha_2 \in ]0,1[$ such that
    \begin{align*}
        (\forall_{j \leq p}:  \alpha_1 &= \text{df}(\lambda_j) \wedge (1-\alpha_1) = \text{df}(\mu)) \Rightarrow \min_{j \leq p}{RSS(\lambda_j)} \leq RSS(\mu), \text{and} \\
        (\forall_{j \leq p}: \alpha_2 &= \text{df}(\lambda_j) \wedge (1-\alpha_2) = \text{df}(\mu)) \Rightarrow  RSS(\mu) \leq \min_{j \leq p}{RSS(\lambda_j)}.
    \end{align*}
    Furthermore, the following conditions assure the selection of an individual variable or the whole design matrix
    \begin{align}
        \bigg( \bigg[\forall_{l \leq k} \frac{(d^-+2\mu)}{(d^-+\mu)^2} \leq \frac{(\overline{d}_l^2+2\lambda_l)}{r(\overline{d}_l^2+\lambda_l)^2}\bigg] \lor \bigg[\text{df}(\mu) \leq \frac{\text{df}(\lambda_l)d^-}{r\overline{d}^+}\bigg] \bigg) &\Rightarrow \min_{j \leq p}{RSS(\lambda_j)} \leq RSS(\mu), \label{claim:indi} \\
         \bigg(\bigg[\forall_{l \leq k} \frac{(d^++2 \mu)}{(d^++\mu)^2} \geq \frac{\text{df}(\lambda_l)}{\overline{d}_l^2}\bigg] \lor \bigg[\forall_{l \leq k} \frac{(d^++2 \mu)}{(d^++\mu)^2} \geq \frac{(\overline{d}_l^2+2 \lambda_l)}{(\overline{d}_l^2+\lambda_l)^2}\bigg]) &\Rightarrow  RSS(\mu) \leq \min_{j \leq p}{RSS(\lambda_j)}.\label{claim:group}
    \end{align}
\end{theorem}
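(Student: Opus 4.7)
The plan is to split the theorem into the existence claim and the explicit selection conditions, both resting on the closed-form RSS expressions from Lemma \ref{lemma:RSSridge} at $m=0$. Setting $a_j \coloneqq d_j^2(d_j^2+2\mu)/(d_j^2+\mu)^2$ and $b_l \coloneqq \text{df}(\lambda_l) = \overline{d}_l^2(\overline{d}_l^2+2\lambda_l)/(\overline{d}_l^2+\lambda_l)^2$, the lemma yields $RSS(\mu) = y^T y - \sum_{j=1}^r a_j(y^T u_j)^2$ and $RSS(\lambda_l) = y^T y - b_l(y^T\overline{u}_l)^2$, so every inequality in the theorem reduces to a comparison between $\sum_j a_j(y^T u_j)^2$ and $\max_l b_l(y^T\overline{u}_l)^2$.

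For the existence of $\alpha_1, \alpha_2$, I would argue by continuity and limits. Under the $\alpha$-parameterization, $\mu(\alpha)$ and $\lambda_j(\alpha)$ depend continuously on $\alpha \in (0,1)$, and so do both sides of the reduction. As $\alpha \to 1^-$, $\mu \to \infty$ drives $RSS(\mu) \to y^T y$ while $\lambda_j \to 0$ converges to the unregularized single-variable OLS fit, so $\min_j RSS(\lambda_j) < RSS(\mu)$ for $\alpha$ near $1$, yielding $\alpha_1$. Symmetrically, as $\alpha \to 0^+$ the group learner approaches full OLS on the column space of $X$ while each individual learner collapses to the null fit, yielding $\alpha_2$.

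For the explicit sufficient conditions I would combine two ingredients: (i) the monotonicity of $s \mapsto (s+2\mu)/(s+\mu)^2$, which is decreasing in $s>0$ and therefore bounds $a_j \leq d_j^2(d^-+2\mu)/(d^-+\mu)^2$ from above and $a_j \geq d_j^2(d^++2\mu)/(d^++\mu)^2$ from below; and (ii) the bridging identity $\sum_l \overline{d}_l^2(y^T\overline{u}_l)^2 = y^T X X^T y = \sum_j d_j^2(y^T u_j)^2$. For (\ref{claim:indi}), the upper bound on $a_j$ gives $\sum_j a_j(y^T u_j)^2 \leq [(d^-+2\mu)/(d^-+\mu)^2] \sum_l \overline{d}_l^2(y^T\overline{u}_l)^2$, which the first stated alternative forces to be dominated by $\max_l b_l(y^T\overline{u}_l)^2$ through a term-by-term majorization combined with an average-max inequality; the second alternative follows by chaining the coarser bound $\sum_j a_j(y^T u_j)^2 \leq \text{df}(\mu)\max_j(y^T u_j)^2$ with a lower bound $\max_l b_l(y^T\overline{u}_l)^2 \geq [b^- d^-/(r\overline{d}^+)]\max_j(y^T u_j)^2$ derived from the same identity. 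For (\ref{claim:group}) the roles of $d^\pm$ swap: the lower bound on $a_j$ together with $\sum_l \overline{d}_l^2(y^T\overline{u}_l)^2 \geq \overline{d}_l^2(y^T\overline{u}_l)^2$ for each $l$ yields $\sum_j a_j(y^T u_j)^2 \geq [(d^++2\mu)/(d^++\mu)^2]\overline{d}_l^2(y^T\overline{u}_l)^2$, and the two alternatives each ensure that this lower bound dominates $b_l(y^T\overline{u}_l)^2$ for every $l$.

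The main obstacle is that $\sum_j a_j(y^T u_j)^2$ and $\max_l b_l(y^T\overline{u}_l)^2$ involve projections of $y$ onto two different orthonormal systems, $\{u_j\}_{j\leq r}$ and $\{\overline{u}_l\}_{l\leq p}$, which are generally not aligned; the identity $\sum_l \overline{d}_l^2(y^T\overline{u}_l)^2 = y^T X X^T y = \sum_j d_j^2(y^T u_j)^2$, which holds because both $\{x_l\}$ and $\{d_j u_j\}$ generate $XX^T$, is the crucial bridge that makes uniform-in-$y$ comparisons possible and produces the structural conditions in $d^\pm, \overline{d}^\pm$ stated in the theorem.
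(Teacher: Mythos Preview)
Your plan is correct and matches the paper's proof in substance. Both arguments rest on the closed-form RSS from Lemma~\ref{lemma:RSSridge}, the monotonicity of $s\mapsto(s+2\mu)/(s+\mu)^2$, an $\ell^\infty$--$\ell^2$ (max-versus-sum) inequality, and the bridging identity $\sum_l \overline d_l^{\,2}(y^T\overline u_l)^2 = y^T XX^T y = \sum_j d_j^2(y^T u_j)^2$; the paper encodes the latter via $x_l = UD(V^T)_l$ and carries the comparison as a single quadratic form $y^T U(\cdot)U^T y$, which is precisely your identity in matrix notation. The only real difference is the existence of $\alpha_1,\alpha_2$: you argue by continuity and the limits $\alpha\to 0^+$, $\alpha\to 1^-$, whereas the paper deduces existence a posteriori from the explicit conditions (\ref{claim:indi}) and (\ref{claim:group}), noting that each can be met for $\alpha$ close enough to $1$ or $0$; your limit argument is slightly more direct for that part but otherwise the two proofs coincide.
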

Theorem \ref{th:consistency} is useful for both definitions of the sparse group boosting, as one can either use the bounds by setting $\lambda_j = \alpha \lambda, \mu = (1-\alpha) \lambda$ and further bound (\ref{claim:indi}) and (\ref{claim:group}) by replacing $d_l^2$ with either $\overline{d}^+$ or $\overline{d}^-$ respectively. Note that in (\ref{claim:group}) it is not possible to use the degrees of freedom of the group design matrix as a bound as in (\ref{claim:indi}), because the smallest sum member cannot be expressed in terms of the sum. \\
We see that there are bounds for the regularization, that always either favor an individual or group base-learner only knowing the largest and smallest non-zero singular values of the group matrix and the column vectors as well as the group size. Especially no assumptions regarding the association between predictors, grouped or individual, and the error term were made. Also, the number of boosting iterations performed and the learning rate play no role in the selection bounds. By restricting the design matrix one can find even tighter bounds in which both individual and group selection can happen for a given $\alpha$.
\begin{corollary}\label{corollary:bounds}
    Consider the same setting as in Theorem \ref{th:consistency}. Setting $X = UD$ meaning $V = I_p$, yields the following bound
    \begin{align*}
        \big(\forall_{j \leq p}: \text{df}(\mu) \leq \text{df}(\lambda_j)\big) \Rightarrow \min_{j \leq p} RSS(\lambda_j) \leq RSS(\mu),
    \end{align*}
    and in the case of $X=UdV^T$ with $d \in \mathbb{R}^+$
    \begin{align*}
        \big(\forall_{j \leq p}: \text{df}(\mu) \geq \frac{1}{p}\text{df}(\lambda_i)\big) \Rightarrow \min_{j \leq p} RSS(\lambda_j) \geq RSS(\mu).
    \end{align*}
\end{corollary}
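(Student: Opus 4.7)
The plan is to expand both $RSS(\mu)$ and the individual $RSS(\lambda_j)$ through the SVD formula from Lemma~\ref{lemma:RSSridge} (equivalently, Equation~(\ref{equation:svd})) and then exploit the structural simplifications of $X$ in each special case. Setting $h(s,\nu) := 2\frac{s}{s+\nu} - \frac{s^2}{(s+\nu)^2}$ for brevity, the group base-learner satisfies $\text{df}(\mu) = \sum_{j=1}^{r} h(d_j^2,\mu)$ and $RSS(\mu) = y^Ty - \sum_{j=1}^{r} h(d_j^2,\mu)(u_j^Ty)^2$, while the $j$-th individual base-learner, with normalized direction $\overline{u}_j$ and singular value $\overline{d}_j$, satisfies $\text{df}(\lambda_j) = h(\overline{d}_j^2,\lambda_j)$ and $RSS(\lambda_j) = y^Ty - \text{df}(\lambda_j)(\overline{u}_j^Ty)^2$.

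For the first bound, $V = I_p$ forces $x_j = d_j u_j$, so $\overline{d}_j = d_j$ and $\overline{u}_j = u_j$. The key step is the elementary max-vs-weighted-average inequality $\sum_j h(d_j^2,\mu)(u_j^Ty)^2 \leq \text{df}(\mu)\max_j(u_j^Ty)^2$. Letting $j^\ast = \arg\max_j (u_j^Ty)^2$ and invoking the hypothesis $\text{df}(\mu) \leq \text{df}(\lambda_{j^\ast})$ would upgrade this bound to $\text{df}(\lambda_{j^\ast})(u_{j^\ast}^Ty)^2 \leq \max_j \text{df}(\lambda_j)(u_j^Ty)^2 = y^Ty - \min_j RSS(\lambda_j)$. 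Re-substituting into the RSS expansion then gives $RSS(\mu) \geq \min_j RSS(\lambda_j)$.

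For the second bound, $X = UdV^T$ with scalar $d$ gives $X^TX = d^2 I_p$, so every column of $X$ has norm $d$, and the normalized direction of column $j$ is $\overline{u}_j = Uv_j$, where $v_j$ is the $j$-th column of $V$. This collapses the group degrees of freedom to $\text{df}(\mu) = p\,h(d^2,\mu)$ while $\text{df}(\lambda_j) = h(d^2,\lambda_j)$, so the hypothesis reduces to the scalar comparison $h(d^2,\mu) \geq h(d^2,\lambda_j)$ for every $j$. The RSS formulas simplify to $RSS(\mu) = y^Ty - h(d^2,\mu)\norm{U^Ty}^2$ and $RSS(\lambda_j) = y^Ty - h(d^2,\lambda_j)(v_j^TU^Ty)^2$. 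The decisive Parseval identity $\sum_{j=1}^{p}(v_j^TU^Ty)^2 = \norm{U^Ty}^2$, which follows from orthogonality of $V$, yields $\max_j (v_j^TU^Ty)^2 \leq \norm{U^Ty}^2$; combining it with the $h$-comparison gives $\max_j h(d^2,\lambda_j)(v_j^TU^Ty)^2 \leq h(d^2,\mu)\norm{U^Ty}^2$, i.e., $\min_j RSS(\lambda_j) \geq RSS(\mu)$.

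The only real subtlety is matching the scalar $\text{df}$ hypothesis to the vector-valued RSS reduction. In Case 1 the sum $\sum_j h(d_j^2,\mu)(u_j^Ty)^2$ must first be collapsed to a single direction through the max-vs-weighted-average bound before the hypothesis can be applied. In Case 2 the analogous move is the Parseval identity for the orthogonal $V$, which lets the individual base-learners collectively tap the same $\norm{U^Ty}^2$ ``signal budget'' that the group fit uses; without it the scalar comparison of $h$-values alone would not suffice. Once these two structural reductions are in place, the remainder is routine algebra on $h$ and the expansions from Lemma~\ref{lemma:RSSridge}.
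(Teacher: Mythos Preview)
Your argument for the first bound ($V=I_p$) is correct; the max-versus-weighted-average step $\sum_j h(d_j^2,\mu)(u_j^Ty)^2 \le \text{df}(\mu)\max_j(u_j^Ty)^2$ is precisely the sharpening of the general norm-inequality step in the proof of Theorem~\ref{th:consistency} that the special structure $V=I_p$ affords, so your route and the paper's coincide.

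For the second bound there is a slip when you translate the hypothesis. You correctly obtain $\text{df}(\mu)=p\,h(d^2,\mu)$ and $\text{df}(\lambda_j)=h(d^2,\lambda_j)$, but substituting these into the printed condition $\text{df}(\mu)\ge \tfrac{1}{p}\text{df}(\lambda_j)$ gives only $h(d^2,\mu)\ge p^{-2}h(d^2,\lambda_j)$, not $h(d^2,\mu)\ge h(d^2,\lambda_j)$ as you claim. With only the weaker inequality the Parseval bound $\max_j (v_j^TU^Ty)^2\le\norm{U^Ty}^2$ is insufficient (take $U^Ty$ aligned with a single $v_j$ to see the gap). What your argument actually establishes is the conclusion under $h(d^2,\mu)\ge h(d^2,\lambda_j)$, i.e.\ $\text{df}(\mu)\ge p\,\text{df}(\lambda_j)$. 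This is exactly the condition produced by specializing the paper's own route---the step $D(V^T)_l(V^T)_l^TD\preceq D^2$ in the proof of Theorem~\ref{th:consistency} with $D=dI$---and it is the condition consistent with the surrounding text's range $\alpha\in[\tfrac{1}{1+p},\tfrac{1}{2}]$ for scaled orthogonal groups (in the sparse-group parametrization, $\text{df}(\mu)\ge p\,\text{df}(\lambda_j)$ reads $1-\alpha\ge p\alpha$, i.e.\ $\alpha\le\tfrac{1}{1+p}$). So the printed $\tfrac{1}{p}$ is almost certainly a misprint for $p$; modulo that, your Parseval argument is correct and matches the paper's approach. (Minor: $\overline{u}_j=U(V^T)_{\cdot j}$ uses the $j$-th \emph{row} of $V$, not its column, though this is immaterial since $V$ is orthogonal.)
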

Follows directly from the proof of Theorem \ref{th:consistency}. These bounds have strong implication for setting the mixing parameter $\alpha$, because set too high or too low one only gets either individual or group base-learners for every design matrix. One example of using Corollary \ref{corollary:bounds} is a categorical variable that is treated as a group base-learners and each dummy variable as an individual base-learner. In this case, $\alpha>0.5$ should be set. This is also an example where the condition of \cite{hofner_framework_2011} of setting the degrees of freedom equal across all baselearners fails, as in this case only individual base-learners can be selected, which is a clear case of variable selection bias. If the design matrix of one group is a scaled orthogonal matrix then $\alpha \in [\frac{1}{1+p}, \frac{1}{2}]$ should be set. One example of this would be a categorical base-learner with equal number of observations per category. The following Theorem allows us to characterize the pairwise selection probability in one boosting step of two base-learners where one is a sub-matrix of the other for scaled orthogonal matrices.
\begin{theorem}\label{theo:ortho_probability}
Let $X \in \mathbb{R}^{n \times p}$ be a scaled orthogonal design matrix such that $X = dU$ for $d\in \mathbb{R}^+$ and $U^{n \times p}$ orthogonal. Define the sub-matrix $X^{(1)} \in \mathbb{R}^{n \times p_1}, 0 < p_1 < p $.
Let $y \in \mathbb{R}^n$ be the outcome variable, $y = \epsilon, \epsilon \sim \mathcal{N}(0,\sigma^2)$ not being dependent on the design matrix. Let $\widehat{\beta}_\lambda$ be the Ridge estimate using the design matrix $X^{(1)}$ for some penalty $\lambda>0$ and $\widehat{\beta}_\mu$ the Ridge using $X$ as design matrix for penalty $\mu > 0$. Let $\text{df}(\lambda)$ and $\text{df}(\mu)$ be the corresponding degrees of freedom. If $\frac{\text{df}(\lambda)}{p_1} \geq \frac{\text{df}(\mu)}{p}$ we can characterize the selection probability based on the residual sum of squares for the two base-learners as
\begin{equation*}
    P\big(RSS(\widehat{\beta}_\lambda)\geq RSS(\widehat{\beta}_\mu)\big) = F_{\beta^{'}\big(\frac{p_1}{2}, \frac{p-p_1}{2}, 1, \frac{\text{df}(\lambda)p}{\text{df}(\mu)p_1}-1\big)}(1),
\end{equation*}
where $F_\beta$ is the distribution function of the beta prime distribution.
\end{theorem}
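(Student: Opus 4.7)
The plan is to exploit the scaled-orthogonal structure of $X$ and $X^{(1)}$ to rewrite the difference of residual sums of squares as a linear combination of two independent chi-squared variables, and then identify the resulting event with a tail probability of a (scaled) beta prime distribution.

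First, I would apply Lemma~\ref{lemma:RSSridge} with $m=0$ to each of the two base-learners. Since $X = dU$ with $U^{T}U = I_p$, every singular value of $X$ equals $d$, and the same is true for $X^{(1)}$ because selecting $p_1$ columns of $dU$ again yields a scaled orthogonal matrix. Setting $\tilde{d}_\lambda = d^{2}/(d^{2}+\lambda)$ and $\tilde{d}_\mu = d^{2}/(d^{2}+\mu)$, the hat matrices simplify to the scalar-multiple projections $H_\lambda = \tilde{d}_\lambda\, U^{(1)}(U^{(1)})^{T}$ and $H_\mu = \tilde{d}_\mu\, UU^{T}$. Plugging these into the identity~(\ref{equation:svd}) and using $\text{df}(\lambda) = p_1(2\tilde{d}_\lambda - \tilde{d}_\lambda^{2})$ (analogously for $\mu$) gives
\begin{align*}
RSS(\widehat{\beta}_\lambda) - RSS(\widehat{\beta}_\mu) = \frac{\text{df}(\mu)}{p}\, y^{T}UU^{T}y - \frac{\text{df}(\lambda)}{p_1}\, y^{T}U^{(1)}(U^{(1)})^{T}y.
\end{align*}

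Next, I would decompose $U = [U^{(1)}, U^{(2)}]$ with $U^{(2)} \in \mathbb{R}^{n\times(p-p_1)}$ and set $Z_1 = (U^{(1)})^{T}y$, $Z_2 = (U^{(2)})^{T}y$. Because $y \sim \mathcal{N}(0,\sigma^{2}I_n)$ and the stacked columns of $U$ are orthonormal, $Z_1$ and $Z_2$ are jointly Gaussian with vanishing cross-covariance, hence independent, with $\|Z_1\|^{2} \sim \sigma^{2} \chi^{2}_{p_1}$ and $\|Z_2\|^{2} \sim \sigma^{2} \chi^{2}_{p-p_1}$. Substituting $y^{T}UU^{T}y = \|Z_1\|^{2} + \|Z_2\|^{2}$ and $y^{T}U^{(1)}(U^{(1)})^{T}y = \|Z_1\|^{2}$ turns the difference into
\begin{align*}
RSS(\widehat{\beta}_\lambda) - RSS(\widehat{\beta}_\mu) = \Big(\tfrac{\text{df}(\mu)}{p} - \tfrac{\text{df}(\lambda)}{p_1}\Big)\|Z_1\|^{2} + \tfrac{\text{df}(\mu)}{p}\|Z_2\|^{2}.
\end{align*}
Under the hypothesis $\text{df}(\lambda)/p_1 \geq \text{df}(\mu)/p$ the coefficient of $\|Z_1\|^{2}$ is non-positive, so (modulo the degenerate boundary case, which gives probability one trivially) the event $RSS(\widehat{\beta}_\lambda) \geq RSS(\widehat{\beta}_\mu)$ rearranges to
\begin{align*}
\frac{\|Z_1\|^{2}}{\|Z_2\|^{2}} \leq \frac{1}{q}, \qquad q := \frac{\text{df}(\lambda)\, p}{\text{df}(\mu)\, p_1} - 1.
\end{align*}

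Finally, the $\sigma^{2}$ cancels in the ratio, so by the standard characterization of the beta prime distribution as a ratio of two independent chi-squared variables, $\|Z_1\|^{2}/\|Z_2\|^{2} \sim \beta^{'}(p_1/2, (p-p_1)/2)$. Interpreting the four-parameter form $\beta^{'}(\alpha,\beta,1,q)$ as the law of $qV$ with $V \sim \beta^{'}(\alpha,\beta)$ gives $F_{\beta^{'}(\alpha,\beta,1,q)}(1) = P(qV \leq 1) = P(V \leq 1/q)$, which matches the computed probability and yields the claimed closed form. The main obstacle I anticipate is purely notational: one must verify that the third and fourth arguments $(1, q)$ in the generalized beta prime parametrization correspond to a pure rescaling and not to a power transform, so that CDF evaluation at $1$ exactly recovers the tail of the underlying chi-squared ratio. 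A smaller point of care is that it is orthonormality of the full matrix $U$ — not merely of $U^{(1)}$ — that upgrades the joint Gaussianity of $(Z_1, Z_2)$ into genuine independence.
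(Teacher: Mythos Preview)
Your proposal is correct and follows essentially the same route as the paper: express the RSS difference via the scaled projections $U^{(1)}(U^{(1)})^{T}$ and $UU^{T}$, split off the complementary block $U^{(2)}$, and recognize the resulting ratio of independent quadratic forms as a (generalized) beta prime. The only cosmetic difference is that the paper routes the distributional identification through the Gamma form of Theorem~\ref{theo:ortho_gamma} rather than invoking $\chi^{2}$ directly, which amounts to the same thing.
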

Theorem \ref{theo:ortho_probability} allows us to know the selection probability of a group base-learner versus one individual base-learner for an orthogonal group design matrix. While this is interesting, one would assume that groups are rather defined such that there are dependency structures within a group. However, it is plausible to assume that an individual base-learner from another group is orthogonal to the group design matrix. In that case it is straightforward to see that the selection probability of the individual base-learner vs. the group base-learner follows a generalized beta prime distribution $\beta^{'}\Big(\frac{1}{2},\frac{p-1}{2},1,\frac{\text{df}(\lambda)p}{\text{df}(\mu)}\Big)$, which in the sparse group boosting becomes $\beta^{'}\Big(\frac{1}{2},\frac{p-1}{2},1,\frac{\alpha p}{1-\alpha}\Big)$. 
\subsection{Within-group and between-group selection}
When defining group and individual base-learners at the same time there are two types of variable selection happening at the same time. There is selection between groups: Which group base-learner will be selected? This selection can only be unbiased if there is an equal selection chance for all group base-learners. The sparse group boosting assures them all to have the same degree of freedom, hence the expected RSS is the same for all base-learners. The same is the case for all individual base-learners compared to each other as they also have the same degree of freedom. However, in the presence of individual variables, there will always be a challenge at the group level. To illustrate, consider a categorical variable, one containing 3 categories and a linear base-learner, together building an orthogonal system, not being associated with the outcome variable. Then, if one wants the selection chance of any categorical variable vs the linear variable to be equal their degrees of freedom have to be equal, meaning $\alpha = 0.5$ in the sparse group boosting. But doing this leads to the group base-learner of the categorical variable to be never selected based on Corollary \ref{corollary:bounds}. Furthermore, the individual categorical base-learners will have a greater selection chance compared to the linear base-learner, because of the greater group size. To counter this, one could penalize the individual base-learners in bigger groups more than the ones in smaller groups. Doing this could give the categorical variable vs. linear variable equal selection chances. But then on the individual variable basis, there would be a bias towards selecting an individual base-learner just because the group it belongs to has a smaller group size. Using the sparse group boosting, one can decide if one rather want a balance between groups or between individual base-learners. However one should keep in mind, that a perfect balance in the case of unequal group sizes may not be easy to achieve. On which level one wants equal selection chances depends on the research question and interpretation of the data. Generally, varying group sizes impose challenges. If a group contains only one variable then the group and individual base-learner are the same and therefore the greater value of either $\alpha$ or $1-\alpha$ is used for both, preferring either the group base-learner or the individual base-learner. From Theorem \ref{th:consistency} we see that the group size affects the selection bounds. This can also be seen in Figure \ref{figure:withingroup} which compares the selection frequency of the group base-learner in the first boosting iteration for group sizes two and three and different dependency structures. Staying again with the example of two categorical variables of equal number of observations within each category from Corollary \ref{corollary:bounds} we know that the selection interval of the smaller categorical variable is a subset of the selection interval of the bigger categorical variable. This means that for a small enough $\alpha$ one can either have the group or individual variable selected depending on the dataset and for the smaller group only group base-learners regardless of the dataset. One could use a group adjustment by the group size to align the lower bound but then the upper bound would also be affected imposing the same issue. 
\begin{figure}[!ht]
        \centering
        \includegraphics[scale=0.55]{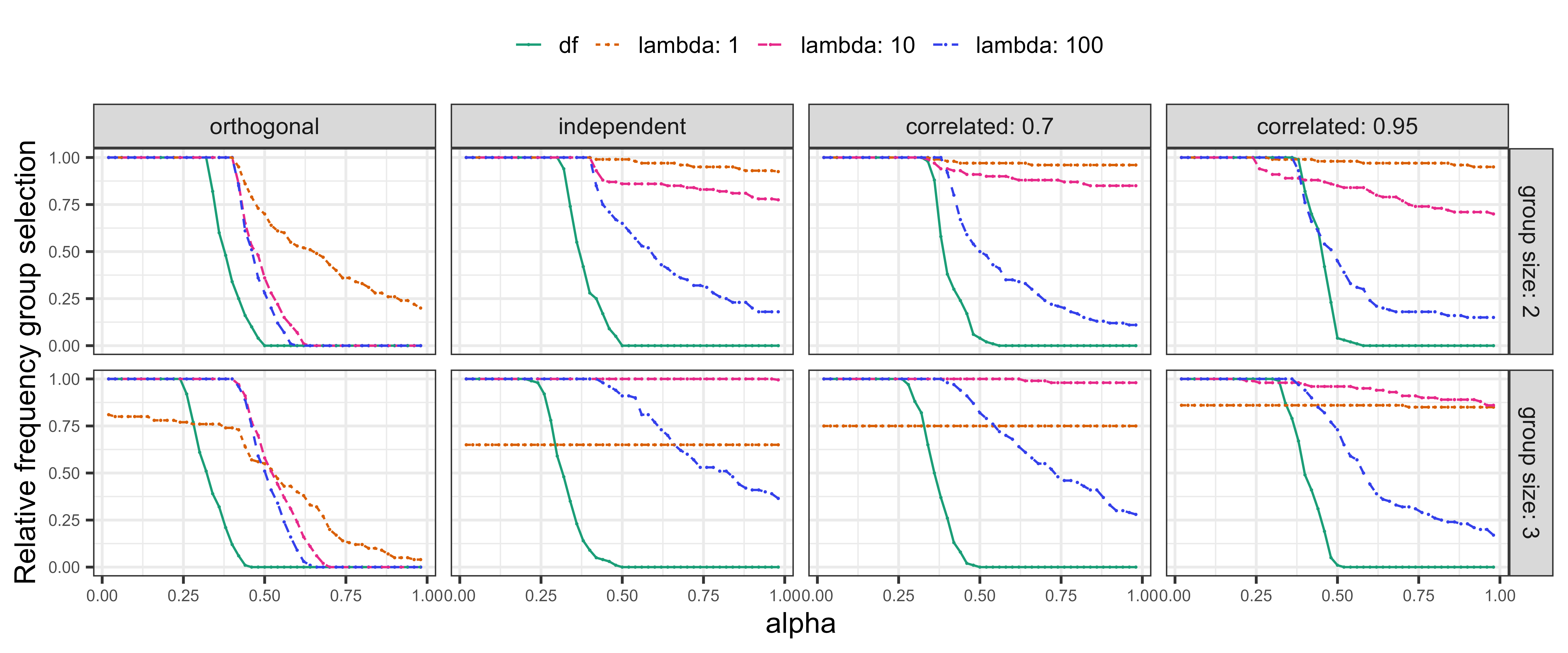}
        \caption{Group selection probability vs individual variables depending on the mixing parameter (alpha) in the sparse group boosting. The two variables within the group are either orthogonal, independent of each other or correlated. }\label{figure:withingroup}
\end{figure}
\FloatBarrier
%%%%%%%%%%%%%%%%%%%%%%%%%%%%%%%%%%%%%%%%%%%%%%%%%%%%%%%%%%%%%%%%%%%%%%%%%%%%%%%%%%%%%%%%%%%%%%%%%%%%%%%%%%%%%%%%%%%%%%%%%%%%%%%%WHERE WE ARE
%%%%%%%%%%%%%%%%%%%%%%%%%%%%%%%%%%%%%%%%%%%%%%%%%%%%%%
%
% Extensions
%
%%%%%%%%%%%%%%%%%%%%%%%%%%%%%%%%%%%%%%%%%%%%%%%%%%%%%%
\subsection{Extensions}
The here presented results like the selection bounds were mainly focused on $\mathcal{L}^2$ boosting. However, the definition of the sparse group boosting can be applied to many cases of grouped datasets. Whenever the degrees of freedom can be computed and modified by a regularization parameter, one can use sgb df as defined in (\ref{def:sgb_alpha}), and whenever Ridge regularized regression can be used sgb lambda as defined in (\ref{def:sgb_lambda}) can be used. This includes generalized linear and additive models but also regularized regression trees. We also want to highlight that semi-grouped datasets can be analyzed using the sparse group boosting by including additional base-learners which are not split up into groups and individual variables. Examples of this could be random effects, treatment effects, or smoothing splines. The degrees of freedom or regularization parameters of these variables could then either be set to $\alpha$, $1-\alpha$ or even zero yielding an unregularized base-learner.
An extension to generalized additive models for location scale and shape (gamlss) \cite{stasinopoulos_generalized_2008} and their boosting variant 'gamboostLSS in\cite{hofner_gamboostlss_2014} is also possible. This would allow the data analyst to also apply sparse group penalization to the linear predictor for other moments of the conditional distribution of the outcome given covariates. We believe that (group) - sparsity is especially important for higher-order moments due to the overall model complexity and the number of variables to be interpreted. 

%%%%%%%%%%%%%%%%%%%%%%%%%%%%%%%%%%%%%%%%%%%%%%%%%%%%%
%
% Empirical Data
%
%%%%%%%%%%%%%%%%%%%%%%%%%%%%%%%%%%%%%%%%%%%%%%%%%%%%%%

\section{Empirical Data: Agricultural dataset}
The analysis was performed with  R (\cite{r_core_team_r_nodate}). For visualizations, the R package ggplot was used (\cite{wickham_ggplot2_2016}).
All computations were conducted on a 3600 MHz Windows machine.
%%% data section
Biomedical data are prominent examples of where sparse-group selection can be used. To show the variety of possible applications we analyse an agricultural dataset. Climate change impacts on the agricultural sector are well documented. The type and level of impacts are crop and region-specific. Not surprisingly, exposure to climate change makes many orchard farming communities in Chile and Tunisia vulnerable to climate change impacts. Many susceptibility-related factors may affect farm vulnerability to climatic impacts. Several adaptation resources (measures/tools) are available to directly reduce the impact on farm operations or reduce the number or sensitivity of susceptibility-related factors. The final objective is to increase the resilience of the farming communities. \\
The dataset (\cite{pechan_reducing_2023}) contains 12 binary outcome variables of interest that are related to adaptive measures against climate change impacts. The 147 independent variables can be grouped into 23 groups depending on the construct the variable belongs to. Two group examples are social variables as well as past adaptive measures.
801 farmers have been included in the study. Further analysis of group variable selection for other outcomes in the dataset can be found in \cite{obster_financial_2024} and \cite{obster_using_2023}.

%%% evaluation design
We again use 11 equally spaced $\alpha$ values from zero to one as mixing parameters for sgl, sgb df and sgb lamda. The dataset was split into two, each only containing farmers of one Country. We randomly split 70 percent of the data into the training data and 30 percent into the test data. The remaining test data was used for the model evaluation. As in the previous section we used the area under the curve (auc) as an evaluation criterion, since all outcome variables are binary. For the training data, we used a 3-fold cross-validation to estimate the optimal stopping parameter for the boosting models and the optimal $\lambda$ value for the sgl models. We used 6 values of $\lambda$ for the sgl and 3000 boosting iterations with a learning rate $\nu = 0.05$ for both sparse group boosting models. In sgb lambda we used $\lambda = 100$.  

%%% results
Referring to Figure \ref{figure:envcomp} which averages across all 12 outcome variables for each dataset and $\alpha$ value, it becomes apparent, that overall the models performed similarly regarding predictability. In Chile and Tunisia, sgb df had the highest AUC, obtained at $\alpha = 0.4$ in Chile and $\alpha = 0.2$ in Tunisia. At the same $ \alpha$ values also the sgl achieved its highest auc. Stronger differences between the models can be found regarding sparsity. For smaller $\alpha$ values all models selected more variables on average, where sgb df and sgb lambda selected more variables for smaller $\alpha$ values than sgl. Component-wise boosting and the lasso yielded roughly the same number of selected variables. The number of partially selected groups, meaning at least one variable within a group gets selected, does on average increase with $\alpha$. Whereas the number of fully selected groups decreases on average with increasing $\alpha$ values for sgb df and sgl. This effect is less pronounced for sgb lambda. This opens up an interesting discussion on what "between group sparsity" means. If one defines it through the number of fully selected groups, meaning all variables within the group have to be selected, then compared to defining it through the number of partially selected groups one gets an opposing effect of $\alpha$. The average percentage of selected variables within groups decreases on average with $\alpha$, corresponding to increasing "within-group-sparsity". \\
The computation time was somewhat volatile, and we had to rerun the models a few times, as the sparse group lasso cross-validation estimation with the sgl package crashed multiple times. We did not fully optimize for computational speed and there are efforts to improve both the computation time of sgl (\cite{ida_fast_2019}; \cite{zhang_efficient_2020}) and boosting (\cite{staerk_randomized_2021}). Theoretically, the computation time of the sparse group boosting should be the sum of the time it takes to fit component-wise boosting and group-component-wise boosting. However, through fitting group base-learners parallel to individual learners, the computation time should be close to either group boosting or component-wise boosting depending on which of the two is slower. Therefore, the speed of the sparse group boosting depends mostly on the implementation of boosting.  
\begin{figure}[!ht]
        \centering
        \includegraphics[scale=0.55]{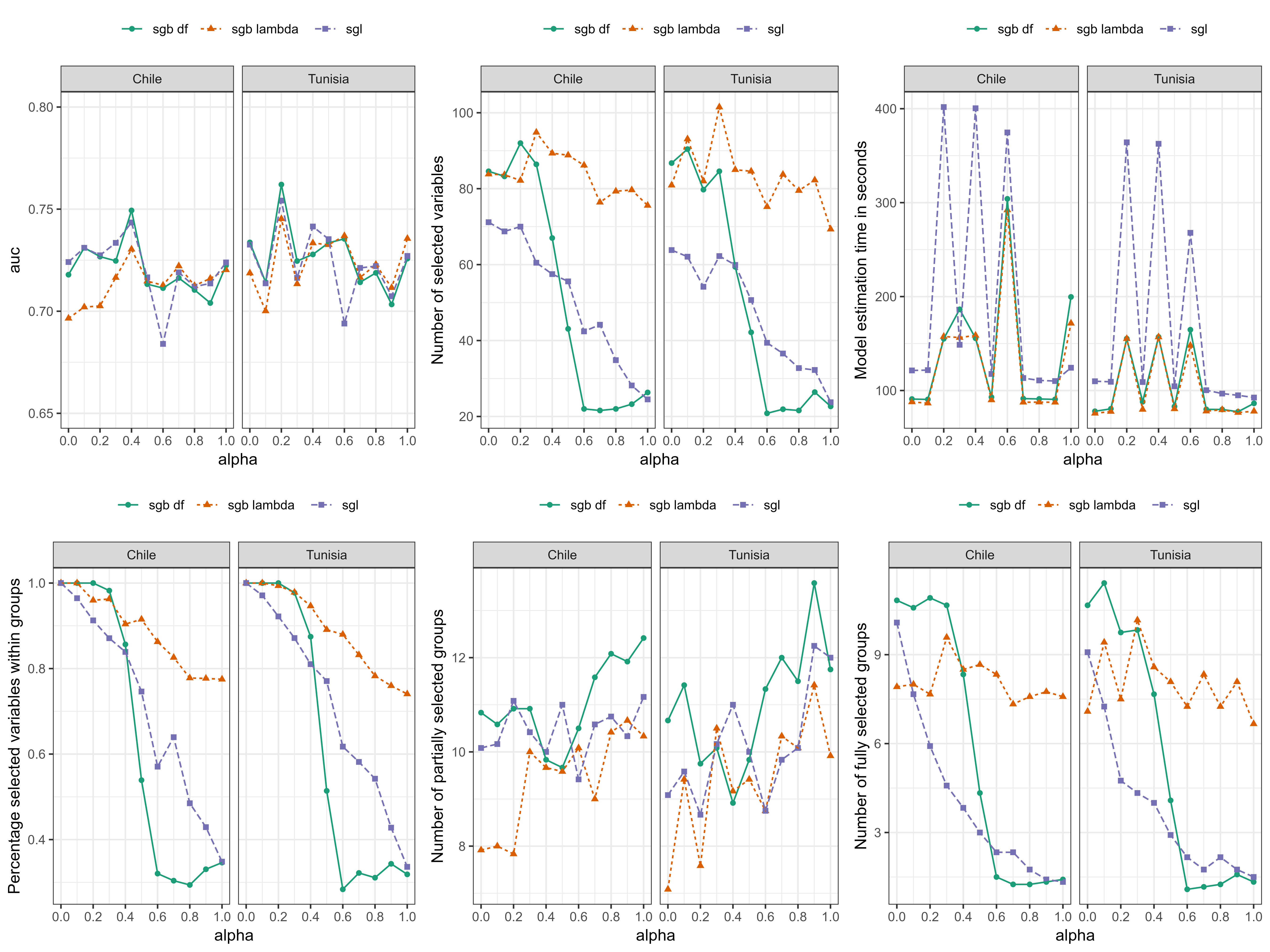}
        \caption{Results of the agricultural dataset. Line-type and point-shape indicate the type model depending on alpha.}\label{figure:envcomp}
\end{figure}
\FloatBarrier
%%%%%%%%%%%%%%%%%%%%%%%%%%%%%%%%%%%%%%%%%%%%%%%%%%%%%%
%
% Simulated Data
%
%%%%%%%%%%%%%%%%%%%%%%%%%%%%%%%%%%%%%%%%%%%%%%%%%%%%%%
\section{Simulated data}
In this section, we will compare the two versions of the sparse group boosting with the lasso, to see how similar the predictive power and sparsity properties are. Therefore, we consider 11 equally spaced mixing parameters $\alpha$ between zero and one for the sparse group lasso and sparse group boosting. This way the lasso/boosting and group lasso/group boosting are also covered. \\
The covariate matrix X was simulated with different numbers of covariates, groups, observations, and covariance structures. The response, y was set to
\begin{equation*}
    \sum_{g=1}^G X^{(g)} \beta^{(g)} + \sigma \epsilon.
\end{equation*}
Here, $\epsilon \sim \mathcal{N}(0, I)$. The signal-to-noise ratio between the non-zero entries of $\beta$ and $\sigma \epsilon$ was set to 4 through the value of $\sigma$. Note that the effective signal-to-noise ratio is additionally altered by setting some elements of $\beta$ to zero, which additionally increases the noise. In the case of no variables being associated with the outcome, no additional error term $\epsilon$ was used.
\\
The tuning of the models was performed with a 3-fold cross-validation performed on the whole simulated data. We used 11 equally spaced mixing parameters $\alpha$ ranging from zero to one. For the sparse group boosting based on $\lambda$ and the sparse group lasso, we chose 10 values for $\lambda$. Since no proven method of selecting a good set of $\lambda$ values in the sparse group boosting exists yet, we chose $\lambda = 50 \cdot i$ for $i \in \{1,...,10\}$, as in boosting ridge regression in general bigger values for $\lambda$ are generally preferable \cite{tutz_boosting_2007}. For the boosting models, we used a learning rate of 0.05 and 2500 boosting iterations to fit the models with early stopping derived from a 3-fold cross-validation. Since the sparse group boosting using the degrees of freedom has no comparable tuning parameter for $\lambda$ in the other two models, we used a finer grid of $\alpha$ values. For a given alpha value $\alpha$ in the sparse group lasso, whenever the model is fitted for $\lambda_i$, the sparse group boosting with the degrees of freedom is fitted with $\alpha+0.01\cdot (i-1)$. This way, for each $\alpha$, 10 versions of each of the three models are being fitted. Afterward, we compare the estimates with the actual parameter vector $\beta$. The parameters used for each scenario are summarized in Table \ref{table:sim_params}. For each scenario, 15 iterations of the data were simulated. 
\begin{table}[ht!]
\caption{Table with aligned units. Full gr. refers to the number of groups where each variable inside it has an effect. Half gr. refers to the number of groups that contain exactly five effects and the remaining ones are zero. Empty gr. refers to the number of groups that contain no effects. The number of variables within these groups is described by full vars, half vars, and empty vars. Cor refers to the degree of multicollinearity of the design matrix, measured by the pairwise correlation between the variables in the design matrix.}
  \begin{center}
    \label{table:sim_params}
     \begin{tabular}{c c c c c c c c c}
      Scenario & full gr. & half gr. & empty gr. & full vars & half vars  & empty vars & cor & n \\
       \hline
        1&5&5&5&15&15&15&0&50  \\
        2&5&5&5&5&5&15&0&50  \\
        3&5&5&5&5&15&5&0&50  \\
        4&5&5&5&15&5&5&0&50  \\
        5&2&2&5&15&15&15&0&50  \\
        6&5&2&2&15&15&15&0&50\\
        7&2&5&2&15&15&15&0&50 \\
        8&0&0&5&0&0&15&0&50     \\
        9&5&0&0&15&0&0&0&50\\
        10&5&5&5&15&15&15&0&500 \\
        11&5&5&5&15&15&15&0.5&50\\
        12&5&5&5&15&15&15&0.95&50 \\
\end{tabular}
  \end{center}
\end{table}
As the main evaluation criterion, we used the root mean squared error (RMSE), defined as
\begin{equation*}
    \text{RMSE} = \frac{1}{p}\sqrt{\sum_{g=1}^G \sum_{j = 1}^{p_g}\big{(}(\beta^{(g)}_j-\widehat{\beta}_j^{(g)}\big{)}^2},
\end{equation*}
where $X_{i \cdot}$ is the $i$-th row of the design matrix. We also computed the proportion of "correct effects" \begin{equation*}
    \frac{\sum_{g=1}^G \sum_{j = 1}^{p_g}\mathds{1}_{[\beta^{(g)}_j \neq 0 \land \widehat{\beta}_j^{(g)} \neq 0]}}{\sum_{g=1}^G \sum_{j = 1}^{p_g}\mathds{1}_{[\beta^{(g)}_j \neq 0 ]}},
\end{equation*}
and the proportion of "correct zeros" and the overall "correct classified" elements of $\beta$.
In Figure \ref{figure:sim_res}, the results of the simulation are displayed. For each model type and $\alpha$ value, out of the 10 hyperparameters, the model with the lowest RMSE was chosen. For each metric considered the values were then averaged across the 15 iterations. Since sgb df had a finer grid of $\alpha$ values, we removed the second digit of $\alpha$, e.g. 0.47 becomes 0.4. The only exceptions were the values between 0.01 and 0.10, which we rounded up to 0.1. This is because $\alpha = 0$ is group boosting and we did not want to mix it with the sparse group boosting. The results of sgb df on the full scale of $\alpha$ without summarizing are shown in Figure \ref{figure:sim_alpha}. \\
Generally, in most scenarios and models, sparse group variable selection improves the fit. For $\alpha \in \{0,1\}$ (group lasso, group boosting, lasso and component-wise boosting), sgb df and sgl yield similar estimates, except scenarios 8 and 10, as the RMSE and also the detection rates are close together. This is in line with \cite{hepp_approaches_2016}. However, the effect of $\alpha$ on the evaluated metrics differs. This is in line with the results from Theorem \ref{th:consistency} and Corollary \ref{corollary:bounds}, as for $\alpha \geq 0.5$, sgb df will be close to component-wise boosting with varying degrees of freedom. This is a difference to sgl, in which the resulting model changes through the whole range of $\alpha$. Generally, there is a trade-off between the correct detection of effects and zeros, which is affected by $\alpha$ through the selection of either groups or single variables. This is the case for all covered models. The range of the correct detection of zeros and effects is greater for sgb df than all other models, which can partly be explained by a finer grid of $\alpha$ values. In this regard, it is important to note again, that sgb lambda is not guaranteed to yield only group-wise selection as seen in Figure \ref{figure:withingroup}. 
In scenario 10 sgl outperformed both variations of the sparse group boosting. However, in this scenario, sgb df was almost always stopped out meaning the number of boosting iterations was set too small for this dataset. A similar issue happens with the models fitted with the 'SGL' package, as the vector of lambda values contained too small values leading to severe over-fitting (compare with Figure 1 in Appendix B). Therefore, we used the 'sparsegl' package for the simulations, which chooses more sensible values for lambda in the cross-validation. \\
  As in many cases, there is no "the best model", only the best model for a given metric and dataset, especially if one looks at opposing metrics.
For sgb df and sgl, whenever there are more full groups or the number of variables in the full groups is greater compared to the half and empty groups (Scenarios 4,6 and 9), $\alpha$ decreases the correct detection of $\beta$ elements and increases RMSE, meaning group-wise selection is more important than individual variable selection. The opposite is true when the number of half groups and the group size of the half groups is greater (Scenarios 3 and 7). Multicollinearity (Scenario 1 vs Scenarios 11 vs. 12) the correct detection rates are similar for the lasso and component-wise boosting. It seems to affect the models for smaller values of $\alpha$ more. As also apparent in Figures \ref{figure:withingroup} and \ref{figure:sim_alpha}, the bounds of the interval in which group selection and individual variable selection happen together in the sgb df is shifted to the right in the case of multicollinearity. \\
\begin{figure}[!ht]
        \centering
        \includegraphics[scale=0.48]{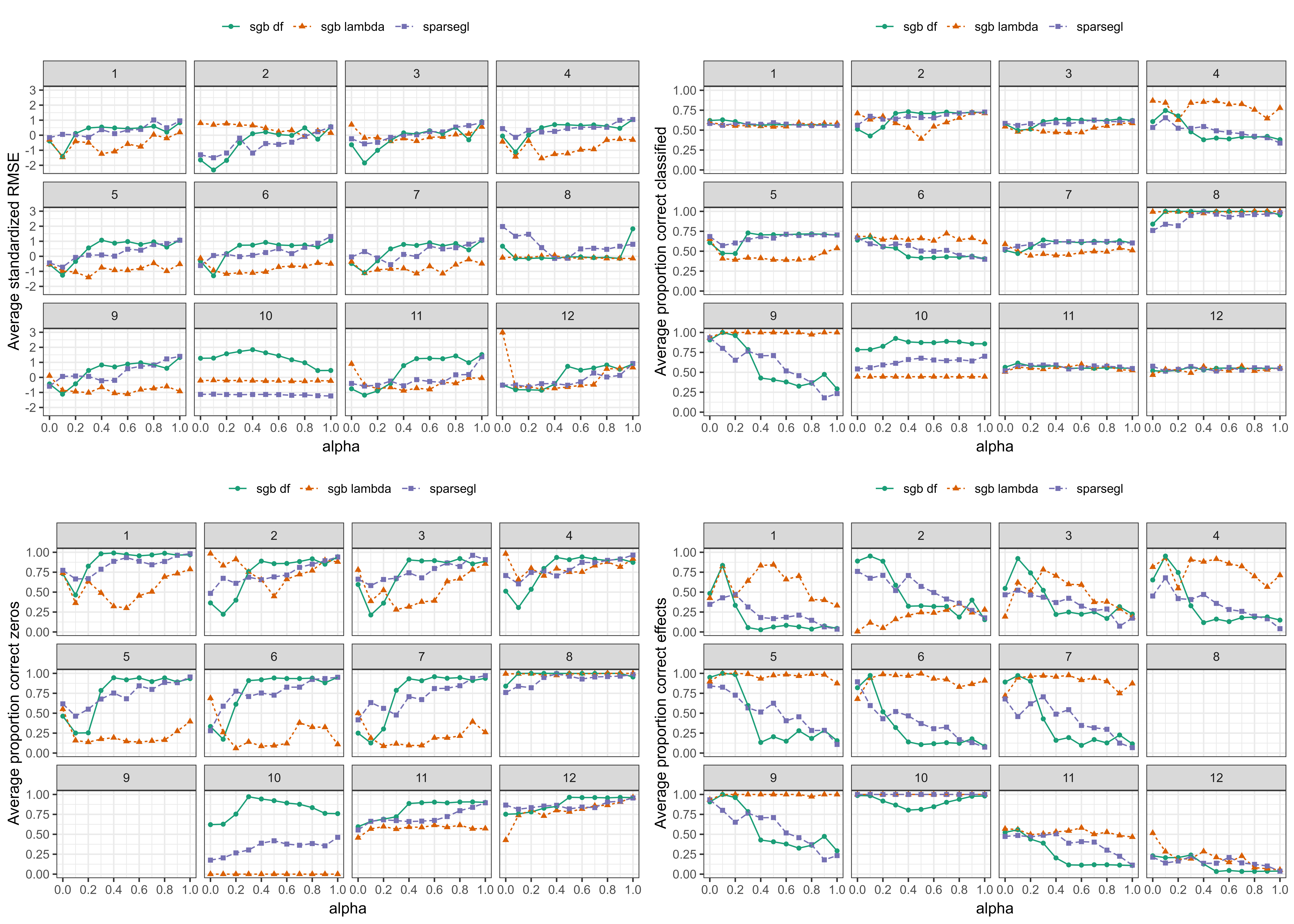}
        \caption{Simulation results for the 12 simulated scenarios averaged across the 15 iterations and 10 hyper-parameter setting for each alpha. Line-type and point-shape indicates the type of model. All metrics compare the model estimates with the true parameter vector.}\label{figure:sim_res}
\end{figure}

\FloatBarrier
\begin{figure}[!ht]
        \centering
        \includegraphics[scale=0.95]{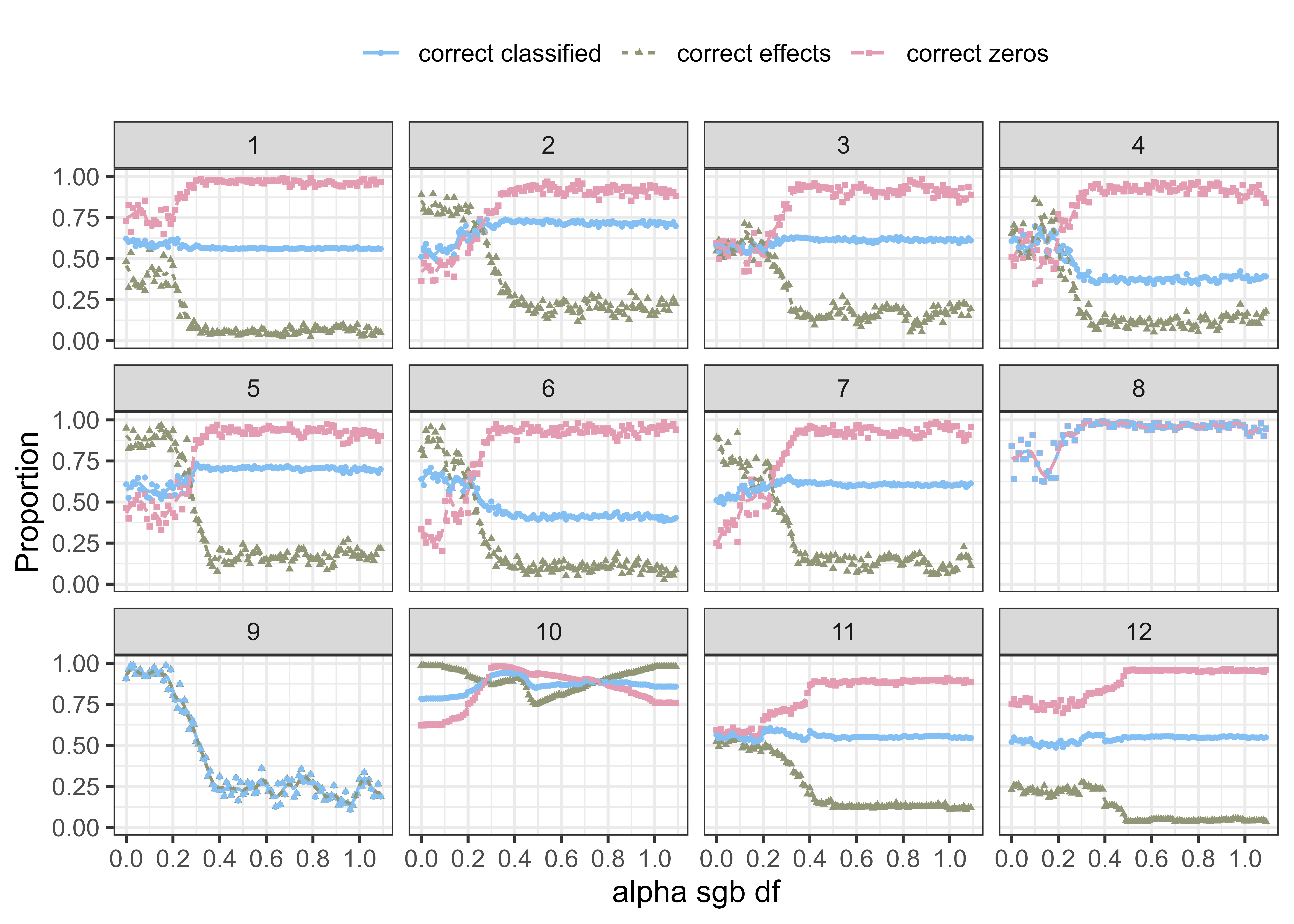}
        \caption{Effect of alpha on the proportion of correct classification of effects and zeros in the sparse group boosting (sgb df) for the 12 simulated scenarios. Each point represents an average of the 15 iterations. Line-type and point-shape indicates the type of detection rate comparing the estimates with the actual parameter vector $\beta$. Smoothed lines were computed by the LOESS with a span of 0.1.}\label{figure:sim_alpha}
\end{figure}
\FloatBarrier
\section{Conclusion, limitations, future work and discussion}
In this manuscript, we presented a framework for adapting sparse group variable selection to boosting. Combining group and individual base-learners in the same model, by weighting the degrees of freedom as mixing parameter $\alpha$, one can fit a model with similar characteristics as the sparse group lasso. However, the effect of $\alpha$ is different in the sparse group lasso compared to $\alpha$ in the sparse group boosting. Even though both models yield only group selection for $\alpha = 0$ and only individual variable selection for $\alpha = 1$, there is a greater range where the sparse group boosting only selects individual variables or groups. We found theoretical bounds for this range depending on the singular values of the group design matrix in $\mathcal{L}^2$ boosting. This implies that for model tuning one should focus on the $\alpha$ values within the theoretical bounds. A good proxy without computing the singular values a priori is to use $\alpha \in [\frac{1}{p_{\text{max}+1}}, 0.6]$. This interval is the one from Corollary \ref{corollary:bounds} and gives some room for multicollinearity and strongly varying singular values of a group design matrix. In the simulations as well as the two real datasets this range was sufficient. While finding the right value for $\alpha$ in the sparse group boosting, it has a natural interpretation in addition to just being a mixing parameter, as it corresponds to the degrees of freedom. \\
Mixing the ridge regularization parameter directly (sgb lambda) one can also fit a sparse group boosting. However, the fitting becomes harder as one has an additional hyper-parameter which has a strong effect on the selection bounds and one looses the interpretability of $\alpha$. Further research would be required to make this formulation a useful competitor. \\
We want to note that there are also other possible ways to fit a similar model through boosting. One idea is boosting the sparse group lasso, or using group boosting with an elastic net penalty within each group-base-learner.
The main difference between the sparse group boosting and the sparse group lasso is the fitting philosophy. When thinking about the sparse group lasso one thinks of shrinking effects and making them vanish either on a group level or on an individual variable level. When thinking about boosting, one rather think about adding individual variables or groups of variables to the global model. Boosting the sparse group lasso or group boosting the elastic net would combine both approaches iteratively while adding and shrinking at the same time. This could have an interesting and different selection behavior over the here proposed sparse group boosting, as one does not have to update full groups if a group base-learner is selected. A distinct advantage of the philosophy of adding rather than shrinking of the sparse group boosting is that one can have a group being selected, updating all variables equally and also additional individual variables on top of the group, which are more important than the other variables within the group. This way one can compare the variable importance of individual variables vs. group variables through the proportion of explained variance/loss function which cannot be done by the sparse group lasso and boosted variants of it. An example of this can be found in \cite{obster_financial_2024}. This way the sparse group boosting can facilitate new research questions and provide additional insight and interpretability. \\
\printbibliography
\appendix
\section{Theorms with proofs}
\begin{theorem}[Distribution of the difference of RSS in orthogonal Ridge regression]\label{theo:ortho_gamma_supp}
Let $X \in \mathbb{R}^{n \times p}$ be a design matrix with orthonormal columns such that $X^TX = I_p$.
Let $y \in \mathbb{R}^n$ be the outcome variable, $y = \epsilon, \epsilon \sim \mathcal{N}(0,\sigma^2)$ not dependent on the design matrix. Further, assume that the least squares estimate $\widehat{\beta} = X^Ty$ exists and $\widehat{\beta}_\lambda$ is the Ridge estimate for some $\lambda>0$.
Define the difference of residual sums of squares as $\Delta =  \text{RSS}(\widehat{\beta}_\lambda) - \text{RSS}(\widehat{\beta}) = (y- \frac{1}{1+\lambda}X \widehat{\beta})^{T}(y-\frac{1}{1+\lambda}X \widehat{\beta}) - (y-X \widehat{\beta})^{T}(y-X \widehat{\beta})$. Then if $\big(1-\frac{\text{df}(\lambda)}{p}\big)> 0$, $\frac{\Delta}{\sigma^2}$ follows a gamma distribution with the following shape-scale parametrization
\begin{equation*}
    \frac{\Delta}{\sigma^2} \sim \Gamma \Big(\frac{p}{2}, 2(1-\frac{2}{p(1+\lambda)}+\frac{1}{p(1+\lambda)^2})\Big).
\end{equation*}
\end{theorem}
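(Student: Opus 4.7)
The plan is to reduce $\Delta$ to a positive scalar multiple of a single chi-squared random variable and then read off the gamma parameters. First I would exploit the orthonormality to simplify the two fits: since $X^TX=I_p$, OLS gives $\widehat{\beta}=X^Ty$, while the Ridge estimate collapses to $\widehat{\beta}_\lambda=\frac{1}{1+\lambda}X^Ty=\frac{1}{1+\lambda}\widehat{\beta}$. Setting $H:=XX^T$, the matrix $H$ is the orthogonal projection onto the column space of $X$, so $H=H^T=H^2$ and $\mathrm{rank}(H)=p$.

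Next I would compute the two residual sums of squares in closed form. Directly, $\mathrm{RSS}(\widehat{\beta})=y^T(I-H)y$, and expanding the square together with $H^2=H$ gives
\[
\mathrm{RSS}(\widehat{\beta}_\lambda)=y^T\Bigl(I-\tfrac{H}{1+\lambda}\Bigr)^{2}y=y^T\Bigl(I-\tfrac{2}{1+\lambda}H+\tfrac{1}{(1+\lambda)^2}H\Bigr)y.
\]
Subtracting yields the clean identity
\[
\Delta = y^T H y\Bigl(1-\tfrac{2}{1+\lambda}+\tfrac{1}{(1+\lambda)^2}\Bigr)=\Bigl(\tfrac{\lambda}{1+\lambda}\Bigr)^{2} y^T H y,
\]
a positive scalar (call it $c$) times a single quadratic form in $y$. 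By Lemma \ref{lemma:RSSridge} applied with $X^TX=I_p$ (so every $d_j=1$) one has $\mathrm{df}(\lambda)=p\bigl(\tfrac{2}{1+\lambda}-\tfrac{1}{(1+\lambda)^2}\bigr)$, hence $c=1-\mathrm{df}(\lambda)/p$, which matches the positivity hypothesis of the theorem and makes the scale parameter transparent.

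To finish, I would invoke the standard fact that if $z\sim\mathcal{N}(0,I_n)$ and $P$ is an orthogonal projection of rank $p$, then $z^T P z\sim\chi_p^2$. Applying this with $z=y/\sigma$ and $P=H$ gives $y^T H y/\sigma^{2}\sim\chi_p^{2}=\Gamma(p/2,2)$ in the shape--scale parametrization. Multiplying a $\Gamma(p/2,2)$ variable by the positive constant $c$ rescales only the scale parameter, so
\[
\frac{\Delta}{\sigma^{2}}=c\cdot\frac{y^T H y}{\sigma^{2}}\sim\Gamma\!\Bigl(\tfrac{p}{2},\,2c\Bigr),
\]
which is the claim after substituting the explicit form of $c$.

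I do not foresee any serious obstacle: the argument is essentially bookkeeping once one observes that $H^2=H$ collapses the Ridge RSS into the same quadratic form $y^T H y$ that governs the OLS RSS, so that $\Delta$ is a deterministic positive multiple of a single chi-squared. The one detail worth handling carefully is the shape--scale versus shape--rate convention for the gamma distribution, since the scaling by $c$ enters the scale parameter (as in the theorem) and not the rate; and one should note that the positivity condition on $1-\mathrm{df}(\lambda)/p$ is automatic for $\lambda>0$ because $c=(\lambda/(1+\lambda))^2>0$.
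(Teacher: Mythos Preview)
Your proposal is correct and follows essentially the same route as the paper's proof: both reduce $\Delta$ to the constant $\bigl(1-\tfrac{2}{1+\lambda}+\tfrac{1}{(1+\lambda)^2}\bigr)=1-\mathrm{df}(\lambda)/p$ times $y^T X X^T y$, then invoke $y^T X X^T y/\sigma^2\sim\chi^2_p$ and scale the resulting gamma. Your version is slightly more explicit in recognizing $H=XX^T$ as a rank-$p$ projection and in noting that $c=(\lambda/(1+\lambda))^2>0$ makes the positivity hypothesis automatic, but the argument is the same.
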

\begin{proof}
    \begin{align*}
        \Delta &= y^T\Big(I_p-\frac{1}{1+\lambda}XX^T\Big)^2y-y^T(I_p-XX^T)y \\
        &= y^T\Big(-\frac{2}{1+\lambda}XX^T+\frac{1}{(1+\lambda)^2}XX^T+XX^T\Big)y \\
        &= \Big(1-\frac{2}{1+\lambda}+\frac{1}{(1+\lambda)^2}\Big)y^TXX^Ty \\
        &= \Big(1-\frac{\text{df}(\lambda)}{p}\Big)y^TXX^Ty.
    \end{align*}
    Since $1-\frac{\text{df}(\lambda)}{p}> 0$ and $\frac{y^TXX^Ty}{\sigma^2} \sim \chi^2(p)$ we end up with $\frac{\Delta}{\sigma^2} \sim \Gamma \Big(\frac{p}{2}, 2(1-\frac{\text{df}(\lambda)}{p})\Big)$.
\end{proof}
\begin{theorem}[Selection intervals of the sparse group boosting]
    Consider a design matrix $X \in \mathbb{R}^{n \times p}$ of rank $r \leq p$ with singular value decomposition $X = UDV^{T}$, where $U \in \mathbb{R}^{n\times p}, V \in \mathbb{R}^{p\times p}$ are unitary matrices and $D = \text{diag}(d_1,...,d_r,0,...,0)$ is a diagonal Matrix containing the singular values. Let $y \in \mathbb{R}^n$ be the outcome variable and $\widehat{\beta}_{\mu} = (X^{T}X+\mu I)^{-1}X^{T}y $ be the Ridge estimate for $\mu \geq 0$. For $j \leq p$ let $\widehat{\beta}_{\lambda_j} = (x_j^{T}x_j+\lambda_j)^{-1}x_j^{T}y$ be the estimate for the $j$-th individual base-learner, and $\overline{d}_j$ be the singular value of $x_j$. Denote $\overline{d}^- = \min_{j \leq p} \overline{d}_j^2$ and $\overline{d}^+ = \max_{j \leq p} \overline{d}_j^2$ as well as and $d^+ = \max_{j \leq r} d_j^2$ and $d^- = \min_{j \leq r} d_j^2$ accordingly. Then, there are always two mixing parameters $\alpha_1, \alpha_2 \in ]0,1[$ such that
    \begin{align*}
        (\forall_{j \leq p}:  \alpha_1 &= \text{df}(\lambda_j) \wedge (1-\alpha_1) = \text{df}(\mu)) \Rightarrow \min_{j \leq p}{RSS(\lambda_j)} \leq RSS(\mu), \text{and} \\
        (\forall_{j \leq p}: \alpha_2 &= \text{df}(\lambda_j) \wedge (1-\alpha_2) = \text{df}(\mu)) \Rightarrow  RSS(\mu) \leq \min_{j \leq p}{RSS(\lambda_j)}.
    \end{align*}
    Furthermore, the following conditions assure the selection of an individual variable or the whole design matrix
    \begin{align}
        \bigg( \bigg[\forall_{l \leq k} \frac{(d^-+2\mu)}{(d^-+\mu)^2} \leq \frac{(\overline{d}_l^2+2\lambda_l)}{r(\overline{d}_l^2+\lambda_l)^2}\bigg] \lor \bigg[\text{df}(\mu) \leq \frac{\text{df}(\lambda_l)d^-}{r\overline{d}^+}\bigg] \bigg) &\Rightarrow \min_{j \leq p}{RSS(\lambda_j)} \leq RSS(\mu), \label{claim:indi_supp} \\
         \bigg(\bigg[\forall_{l \leq k} \frac{(d^++2 \mu)}{(d^++\mu)^2} \geq \frac{\text{df}(\lambda_l)}{\overline{d}_l^2}\bigg] \lor \bigg[\forall_{l \leq k} \frac{(d^++2 \mu)}{(d^++\mu)^2} \geq \frac{(\overline{d}_l^2+2 \lambda_l)}{(\overline{d}_l^2+\lambda_l)^2}\bigg]) &\Rightarrow  RSS(\mu) \leq \min_{j \leq p}{RSS(\lambda_j)}.\label{claim:group_supp}
    \end{align}
\end{theorem}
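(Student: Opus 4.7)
The plan is to reduce both $RSS(\mu)$ and $RSS(\lambda_l)$ to simple quadratic forms in $y$ through the SVD of $X$ provided by Lemma \ref{lemma:RSSridge} at $m=0$. Define $A_j^\mu := d_j^2(d_j^2+2\mu)/(d_j^2+\mu)^2 = 1 - \mu^2/(d_j^2+\mu)^2$, and write $x_l = \overline{d}_l \overline{u}_l$ so that $(\overline{u}_l^T y)^2 = (x_l^T y)^2 / \overline{d}_l^2$. Denoting the RSS reductions $R_\mu := \|y\|^2 - RSS(\mu)$ and $R_l := \|y\|^2 - RSS(\lambda_l)$, these become
$$R_\mu = \sum_{j=1}^r A_j^\mu (u_j^T y)^2, \qquad R_l = \frac{\text{df}(\lambda_l)}{\overline{d}_l^2}(x_l^T y)^2.$$
The claim $\min_l RSS(\lambda_l) \leq RSS(\mu)$ is then equivalent to $\max_l R_l \geq R_\mu$, and the reverse direction characterizes group selection. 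Two monotonicity facts will drive everything: $t \mapsto A_t^\mu$ is increasing on $[0,\infty)$, while the per-unit coefficient $g_\mu(t) := (t+2\mu)/(t+\mu)^2 = A_t^\mu/t$ is strictly decreasing.

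For the existence of $\alpha_1, \alpha_2 \in \, ]0,1[$, I would invoke continuity and the boundary behavior of $R_\mu$ and $R_l$ as functions of $\alpha$. As $\alpha \to 1^{-}$, the constraint $\text{df}(\lambda_l) = \alpha$ forces each $\lambda_l \to 0$, so every $R_l$ approaches its OLS value $(\overline{u}_l^T y)^2$; meanwhile $\text{df}(\mu) = 1-\alpha \to 0$ forces $\mu \to \infty$, driving $R_\mu \to 0$. Hence $\max_l R_l > R_\mu$ strictly for some $\alpha_1$ sufficiently close to $1$. A symmetric argument as $\alpha \to 0^{+}$ (individual shrinkage becomes extreme, the group approaches full OLS) yields $\alpha_2$ with the group base-learner dominating.

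For the structural sufficient conditions the workhorse is the identity $\sum_{l=1}^p (x_l^T y)^2 = y^T X X^T y = \sum_{j=1}^r d_j^2(u_j^T y)^2$. For the group-selection inequality (\ref{claim:group_supp}) I would lower-bound $R_\mu = \sum_j d_j^2 g_\mu(d_j^2)(u_j^T y)^2 \geq g_\mu(d^+) \sum_l (x_l^T y)^2$, using that $g_\mu$ is decreasing and $d_j^2 \leq d^+$. Under the second alternative $g_\mu(d^+) \geq (\overline{d}_l^2+2\lambda_l)/(\overline{d}_l^2+\lambda_l)^2$ for every $l$, so $R_l \leq g_\mu(d^+)(x_l^T y)^2 \leq g_\mu(d^+) \sum_{l'}(x_{l'}^T y)^2 \leq R_\mu$; the first alternative rearranges to $\text{df}(\lambda_l)/\overline{d}_l^2 \leq g_\mu(d^+)$ and the same chain closes after cancelling $\overline{d}_l^2$. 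For the individual-selection inequality (\ref{claim:indi_supp}) the argument is mirrored: upper-bound $R_\mu \leq g_\mu(d^-) \sum_l (x_l^T y)^2$ using monotonicity in the other direction, and lower-bound $\max_l R_l$ by averaging ($\max \geq \text{average}$) over the $R_l$'s, which supplies the factor accounting for the denominator $r$ in the first alternative; for the second alternative combine $\max_l (x_l^T y)^2 \geq (d^-/p)\sum_j(u_j^T y)^2$ with $\overline{d}_l^2 \leq \overline{d}^+$ to recover the stated bound $\text{df}(\mu) \leq \text{df}(\lambda_l)d^-/(r\overline{d}^+)$.

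The hardest part will be the combinatorial bookkeeping in (\ref{claim:indi_supp}). The group reduction is naturally a sum of $r$ positive quantities, while the competing individual side is a maximum over $p$ rank-one quantities supported in the same column space; converting the sum form into something the pigeonhole can hand off to the max requires threading the identity $\sum_l (x_l^T y)^2 = \sum_j d_j^2(u_j^T y)^2$ very carefully so that the extremes $d^-$, $\overline{d}^+$ and the size factors land at exactly the constants in the statement. Getting both alternative hypotheses with sharp (rather than merely loose) constants is the delicate step, whereas (\ref{claim:group_supp}) is comparatively routine because the sum bound and the max bound point in the same direction there.
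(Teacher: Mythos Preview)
Your approach coincides with the paper's in all the load-bearing places: the SVD representation of the RSS reductions, the monotonicity of $g_\mu(t)=(t+2\mu)/(t+\mu)^2$, the identity $\sum_l(x_l^Ty)^2=\sum_j d_j^2(u_j^Ty)^2$, and the $\max\geq$\,average device for (\ref{claim:indi_supp}). The paper packages the comparison as a single quadratic form $y^TU\big(-\tilde D+\frac{\text{df}(\lambda_l)}{\overline d_l^{2}}D(V^T)_l(V^T)_l^TD\big)U^Ty$ and then argues positive semidefiniteness of the bracketed (diagonal) matrix, whereas you work with the scalar reductions $R_\mu,R_l$ directly; these are equivalent bookkeepings of the same inequality. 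The one substantive difference is the existence of $\alpha_1,\alpha_2$: the paper deduces it \emph{a posteriori} from the explicit sufficient conditions (\ref{claim:indi_supp})--(\ref{claim:group_supp}) by checking they can be satisfied, while you argue directly via the limits $\alpha\to 1^-$ (forcing $R_\mu\to 0$ with $R_l$ tending to OLS) and $\alpha\to 0^+$ (the reverse). Your limit route is shorter and more transparent for that clause; the paper's route has the advantage of delivering existence and the quantitative bounds in one stroke.

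One caution on the pigeonhole step in (\ref{claim:indi_supp}): averaging over the $p$ individual learners naturally produces a factor $1/p$, but the stated bound carries $1/r$. The paper obtains $1/r$ by applying $\|z\|_\infty^2\geq\frac{1}{r}\|z\|_2^2$ to a vector that effectively lives in $\mathbb{R}^r$ (the nonzero part of $DU^Ty$), not to the $p$-vector $(x_l^Ty)_l$. Make sure your averaging is set up on that $r$-dimensional object so the constant matches; in the full-rank case $r=p$ the distinction disappears.
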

\begin{proof}
    First, choose a base-learner with a design matrix vector denoted as $x_l$. By using the singular value decomposition of $x_l=\overline{u}_l\overline{d}_l$ and $X=UDV^T$, we can rewrite
    $RSS(\lambda_l)$ and $RSS(\mu)$ as
    \begin{align*}
       RSS(\mu) = y^Ty -y^T\Big(\sum_{j=1}^r \Big[2\frac{d_j^2}{d_j^2+\mu}-\frac{d_j^4}{(d_j^2+\mu)^2}\Big]u_ju_j^{T}\Big)  y
    \end{align*}
    and
    \begin{align*}
       RSS(\lambda_l) = y^Ty -y^T\Big(\Big[2\frac{\overline{d}^2_l}{\overline{d}^2_l+\lambda_l}-\frac{\overline{d}_l^4}{(\overline{d}^2_l+\lambda_l)^2}\Big]\overline{u}_l \overline{u}_l^{T}\Big) y = y^Ty- y^T \text{df}(\lambda_l) \overline{u}_l\overline{u}_l^Ty.
    \end{align*}
    Denote the diagonal elements $\tilde{d}_j = 2\frac{d_j^2}{d_j^2+\mu}-\frac{d_j^4}{(d_j^2+\mu)^2}$ of $\tilde{D}$ and
    note that $\overline{d}_j^2 \in [d^-,d^+]$, because $x_l$ is a sub-matrix of $X$.\\ 
    Then for some $l \leq p$
    \begin{align}
        RSS(\mu) - RSS(\lambda_l) &= y^Ty -y^T\Big(\sum_{j=1}^r \tilde{d}_j u_ju_j^{T}\Big)  y - \big(y^Ty -y^T \text{df}(\lambda_l) \overline{u}_l\overline{u}_l^{T}y \big) \nonumber\\
        &= y^T\Big[-\sum_{j=1}^r \tilde{d}_j u_ju_j^{T} + \text{df}(\lambda_l) \overline{u}_l\overline{u}_l^{T} \Big]y \nonumber\\
        &=y^T\big(-U\tilde{D}U^T+\frac{\text{df}(\lambda_l)}{\overline{d}_l^2} UD(V^T)_l{(V^T)_l}^TDU^T\big)y \nonumber\\
        &=y^TU\big(-\tilde{D}+\frac{\text{df}(\lambda_l)}{\overline{d}_l^2} D(V^T)_l{(V^T)_l}^TD\big)U^Ty.\label{equ:start}
    \end{align}
    Here we have used the fact that $\overline{u}_l$ is the left singular value of $x_l= UD(V^T)_l$.\\ We will now consider the first claim (\ref{claim:indi_supp}). \\
    Using the norm inequality $\forall z \in \mathbb{R}^r:\norm{z}_\infty^2 \geq \frac{1}{r} \norm{z}_2^2$ we get
    \begin{align}
        RSS(\mu) - \min_{l \leq k} RSS(\lambda_l) &=\max_{l \leq k} y^TU\big(-\tilde{D}+\frac{\text{df}(\lambda_l)}{\overline{d}_l^2} D(V^T)_l{(V^T)_l}^TD\big)U^Ty \nonumber \\ 
        &\geq y^TU(-\tilde{D}+\frac{\text{df}(\lambda_{l})}{r{\overline{d}_l}} D^2)U^Ty.\label{condition:firstclaim}
    \end{align}
    Here we have also used that $\frac{\text{df}(\lambda_l)}{{\overline{d}_l}^+}$ is minimal for $l^*$: $\overline{d}_{l^*} =\overline{d}^+$, because $d \mapsto \frac{1}{d^2} \Big(\frac{2d^2}{d^2+\lambda}- \frac{(d^2)^2}{(d^2+\lambda)^2}\Big)$ is a decreasing function which can be seen by taking the derivative with respect to $d$. In the case of $\text{df}(\mu) \leq \frac{\text{df}(\lambda)d^-}{r\overline{d}^+}$ all diagonal elements in (\ref{condition:firstclaim}) are greater zero because $\text{df}(\mu) \geq \tilde{d}_j$. To see the other part of (\ref{claim:indi_supp}) we continue with (\ref{condition:firstclaim}), look at diagonal element $j$ and observe that
    \begin{align*}
        \bigg[\frac{2d_j^2}{d_j^2+\mu}- \frac{d_j^4}{(d_j^2+\mu)^2}\bigg] \leq \frac{\frac{2\overline{d}^2_l}{\overline{d}^2_l+\lambda}- \frac{\overline{d}_l^4}{(\overline{d}^2_l+\lambda_l)^2}}{r\overline{d}^2_l} d_j^2 \\
        \Leftrightarrow \frac{(d_j^2+2\mu)}{(d_j^2+\mu)^2} \leq \frac{(\overline{d}^2_l+2\lambda_l)}{r(\overline{d}^2_l+\lambda_l)^2}.
    \end{align*}
    Therefore, all diagonal elements are greater zero if (\ref{claim:indi_supp}) holds and $d^2_j= d^-$, since $d \mapsto \frac{-(d^2+2 \mu)}{(d^2+\mu)^2}$ is a decreasing function, which can be easily seen by taking the derivative with respect to $d$. \\
    For the second claim in (\ref{claim:group_supp}), we return to(\ref{equ:start}), which can be bounded by
    \begin{align*}
        &RSS(\mu) - RSS(\lambda_l) \\
        = &y^TU(-\tilde{D}+\frac{\text{df}(\lambda_l)}{\overline{d}_l^2} D(V^T)_l{(V^T)_l}^TD)U^Ty. \\ 
        \leq &y^TU\Bigg(-\text{diag}\bigg[\frac{2d^2_j}{d_j^2+\mu}- \frac{(d_j^2)^2}{(d^2_j+\mu)^2}\bigg]+\bigg[\frac{\text{df}(\lambda_l)}{\overline{d}_l^2}\bigg]\text{diag}(d_j^2)\Bigg)U^Ty. \\
        = &y^TU\Bigg(-\text{diag}\bigg[\frac{d^2_j(d_j^2+2 \mu)}{(d_j^2+\mu)^2}\bigg]+\bigg[\frac{(\overline{d}_l^2+2 \lambda_l)}{(\overline{d}_l^2+\lambda_l)^2}\bigg]\text{diag}(d_j^2)\Bigg)U^Ty \\
        \leq  &y^TU\Bigg(\text{diag}(d^2_j)\bigg[\frac{-(d^++2 \mu)}{(d^++\mu)^2}+\frac{(\overline{d}_l^2+2 \lambda_l)}{(\overline{d}_l^2+\lambda_l)^2}\bigg]\Bigg)U^Ty \leq 0.
    \end{align*} 
    In the last step we have used that $d \mapsto \frac{(d^2+2 \mu)}{(d^2+\mu)^2}$ is a decreasing function. \\
    The first part follows from (\ref{claim:indi_supp}) and (\ref{claim:group_supp})
\end{proof}
\begin{theorem}\label{theo:ortho_probability_supp}
Let $X \in \mathbb{R}^{n \times p}$ be a scaled orthogonal design matrix such that $X = dU$ for $d\in \mathbb{R}^+$ and $U^{n \times p}$ orthogonal. Define the sub-matrix $X^{(1)} \in \mathbb{R}^{n \times p_1}, 0 < p_1 < p $.
Let $y \in \mathbb{R}^n$ be the outcome variable, $y = \epsilon, \epsilon \sim \mathcal{N}(0,\sigma^2)$ not being dependent on the design matrix. Let $\widehat{\beta}_\lambda$ be the Ridge estimate using the design matrix $X^{(1)}$ for some penalty $\lambda>0$ and $\widehat{\beta}_\mu$ the Ridge using $X$ as design matrix for penalty $\mu > 0$. Let $\text{df}(\lambda)$ and $\text{df}(\mu)$ be the corresponding degrees of freedom. If $\frac{\text{df}(\lambda)}{p_1} \geq \frac{\text{df}(\mu)}{p}$ we can characterize the selection probability based on the residual sum of squares for the two base-learners as
\begin{equation*}
    P\big(RSS(\widehat{\beta}_\lambda)\geq RSS(\widehat{\beta}_\mu)\big) = F_{\beta^{'}\big(\frac{p_1}{2}, \frac{p-p_1}{2}, 1, \frac{\text{df}(\lambda)p}{\text{df}(\mu)p_1}-1\big)}(1),
\end{equation*}
where $F_\beta$ is the distribution function of the beta prime distribution.
\end{theorem}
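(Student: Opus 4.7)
\textbf{Plan of proof for Theorem \ref{theo:ortho_probability_supp}.}

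The plan is to exploit the orthogonality $X = dU$ to reduce both residual sums of squares to scalar multiples of quadratic forms in independent Gaussian coordinates, and then recognize the resulting inequality as a probability about a (scaled) ratio of independent chi-squares. First I would write out the hat matrices explicitly. Since $X^{T}X = d^{2} I_{p}$ and $(X^{(1)})^{T}X^{(1)} = d^{2} I_{p_1}$, the Ridge hat matrices collapse to $H_{\mu} = \frac{d^{2}}{d^{2}+\mu}\,UU^{T}$ and $H_{\lambda} = \frac{d^{2}}{d^{2}+\lambda}\,U^{(1)}(U^{(1)})^{T}$, where $U^{(1)}$ consists of the first $p_1$ columns of $U$. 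Writing $s_{\mu} = 2\frac{d^{2}}{d^{2}+\mu} - \frac{d^{4}}{(d^{2}+\mu)^{2}}$ and $s_{\lambda}$ analogously, Lemma \ref{lemma:RSSridge} (the one-step form) gives $RSS(\widehat{\beta}_{\mu}) = y^{T}y - s_{\mu}\, y^{T}UU^{T}y$ and $RSS(\widehat{\beta}_{\lambda}) = y^{T}y - s_{\lambda}\, y^{T}U^{(1)}(U^{(1)})^{T}y$. Moreover $\text{df}(\mu) = p\,s_{\mu}$ and $\text{df}(\lambda) = p_{1}\,s_{\lambda}$, so $\frac{\text{df}(\lambda)}{p_{1}} \geq \frac{\text{df}(\mu)}{p}$ is exactly $s_{\lambda} \geq s_{\mu}$.

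Next I would set $B = y^{T}U^{(1)}(U^{(1)})^{T}y$ and $A = y^{T}UU^{T}y$, so that the event $\{RSS(\widehat{\beta}_{\lambda}) \geq RSS(\widehat{\beta}_{\mu})\}$ is equivalent to $\{s_{\mu}A \geq s_{\lambda}B\}$. Split $A = B + C$ with $C = y^{T}U^{(2)}(U^{(2)})^{T}y$, where $U^{(2)}$ gathers the remaining $p - p_{1}$ columns of $U$. Because $y \sim \mathcal{N}(0,\sigma^{2}I_{n})$ and the columns of $U$ are orthonormal, $U^{T}y/\sigma$ is standard normal in $\mathbb{R}^{p}$; hence $B/\sigma^{2} \sim \chi^{2}_{p_{1}}$ and $C/\sigma^{2} \sim \chi^{2}_{p-p_{1}}$ are independent. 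The inequality $s_{\mu}(B+C) \geq s_{\lambda}B$ rearranges, using $s_{\lambda} > s_{\mu} > 0$, to $R := B/C \leq s_{\mu}/(s_{\lambda} - s_{\mu})$; the assumption $\text{df}(\lambda)/p_{1} \geq \text{df}(\mu)/p$ is precisely what makes this threshold positive and hence the event non-degenerate.

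The ratio $R = B/C$ of two independent chi-squares has the standard beta prime distribution $\beta'(p_{1}/2,\,(p-p_{1})/2)$, and this is the key distributional identification. To match the stated generalized form, set $k = (s_{\lambda} - s_{\mu})/s_{\mu}$ and observe that $k = \text{df}(\lambda)p/(\text{df}(\mu)p_{1}) - 1$ by the earlier identities. Then $P(R \leq 1/k) = P(kR \leq 1)$, and because $kR$ has a generalized beta prime distribution with shape parameters $p_{1}/2$ and $(p-p_{1})/2$, exponent $1$, and scale $k$ (i.e.\ $\beta'(p_{1}/2,(p-p_{1})/2,1,k)$), evaluating its CDF at $1$ yields the claimed expression.

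The only subtle step is the last one: one must be explicit about which 4-parameter parametrization of the beta prime distribution is used, and verify that the scale transformation $R \mapsto kR$ indeed lands in the family $\beta'(\alpha,\beta,1,k)$ with CDF evaluated at $1$. Everything else is algebraic manipulation of $s_{\mu}, s_{\lambda}$ and the bookkeeping that connects these to $\text{df}(\mu)/p$ and $\text{df}(\lambda)/p_{1}$; the statistical content rests entirely on the standard fact that independent $\chi^{2}_{p_{1}}$ and $\chi^{2}_{p-p_{1}}$ variates produce a $\beta'$-distributed ratio.
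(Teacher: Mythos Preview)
Your proposal is correct and follows essentially the same route as the paper's proof: both reduce $RSS(\widehat{\beta}_\lambda)-RSS(\widehat{\beta}_\mu)$ to the difference $\frac{\text{df}(\mu)}{p}\,y^{T}UU^{T}y - \frac{\text{df}(\lambda)}{p_1}\,y^{T}U^{(1)}(U^{(1)})^{T}y$, split $UU^{T}$ into the contributions of $U^{(1)}$ and the remaining columns, and then identify the resulting ratio of independent $\chi^{2}_{p_1}$ and $\chi^{2}_{p-p_1}$ quadratic forms as a (scaled) beta prime variable. The only cosmetic difference is that the paper phrases the two quadratic forms as independent Gamma variables via Theorem~\ref{theo:ortho_gamma_supp} before forming the ratio, whereas you go directly through the $\chi^{2}$ formulation; these are equivalent.
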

\begin{proof}
    \begin{align*}
    \text{RSS}(\widehat{\beta}_\lambda) - \text{RSS}(\widehat{\beta}_\mu) \geq 0 \Leftrightarrow \\
     y^T\Big(I_p-\frac{1}{1+\lambda}X^{(1)}{X^{(1)}}^T\Big)^2y-y^T\Big(1-\frac{1}{1+\mu}XX^T\Big)^2y \geq 0 \Leftrightarrow \\
        -\frac{\text{df}(\lambda)}{p_1}y^TU^{(1)}{U^{(1)}}^Ty + \frac{\text{df}(\mu)}{p}y^TUU^Ty \geq 0 \Leftrightarrow \\
        \text{df}(\mu)y^TU_{\{p_1+1,...,p\}}{U_{\{p_1+1,...,p\}}}^Ty - \Big( \frac{\text{df}(\lambda)}{p_1}-\frac{\text{df}(\mu)}{p}\Big) y^TU^{(1)}{U^{(1)}}^Ty \geq 0 \\
        \frac{\Big( \frac{\text{df}(\lambda)}{p_1}-\frac{\text{df}(\mu)}{p}\Big) y^TU^{(1)}{U^{(1)}}^Ty}{\frac{\text{df}(\mu)}{p}y^TU_{\{p_1+1,...,p\}}{U_{\{p_1+1,...,p\}}}^Ty} \leq 1
    \end{align*} 
In the 8-th line $\frac{\text{df}(\lambda)}{p_1}-\frac{\text{df}(\mu)}{p}$ was used. From the derivations in Lemma \ref{theo:ortho_gamma_supp} we know that 
\begin{equation*}
\frac{\Big( \frac{\text{df}(\lambda)}{p_1}-\frac{\text{df}(\mu)}{p}\Big) y^TU^{(1)}{U^{(1)}}^Ty}{\frac{\text{df}(\mu)}{p}y^TU_{\{p_1+1,...,p\}}{U_{\{p_1+1,...,p\}}}^Ty} \sim \frac{\Gamma\Big(\frac{p_1}{2},2\Big(\frac{\text{df}(\lambda)}{p_1}-\frac{\text{df}(\mu)}{p}\Big)\Big)}{\Gamma\Big(\frac{p-p_1}{2},\frac{2\text{df}(\mu)}{p}\Big)} \sim \beta^{'}\Big(\frac{p_1}{2}, \frac{p-p_1}{2}, 1, \frac{\text{df}(\lambda)p}{\text{df}(\mu)p_1}-1\Big).
\end{equation*}
For the last step, the independence of the two gammas was used which follows from the orthogonality of $X$ and therefore the independence of the two quadratic forms.
\end{proof}
\section{Organisational research data}
As a second application, we use a dataset in the field of innovation research in the public sector conducted in 2020 (https://github.com/FabianObster/sgb). A number of 208 soldiers have been interviewed with a focus on organizational empowerment and its determining factors within the German armed forces. We use 10 groups of variables each containing 4 variables associated with the individual innovation potential (\cite{schiesl_erstellung_2015}) and one group containing 20 variables describing the organizational innovation (Intrapreneurship) potential (\cite{moghaddas_organizational_2020}) to explain the numeric outcome variable "the organizational empowerment scale" 
(\cite{matthews_organizational_2003}). A common way to analyze these types of datasets in the social sciences is to average the variables (items) belonging to a group (construct), as the number of items is relatively high and they are in many cases correlated within a group. However, with this approach, within-group comparisons and sparsity are not obtainable. Models performing sparse-group variable selection allow for more flexibility. We compare the sparse group lasso with the here proposed versions of the sparse group boosting. We are interested in two properties, the predictive performance on held-out data and the sparsity property depending on the mixing parameter $\alpha$. To do this, we fit the three models to half of the data and compare the predictive performance measured by the mean squared error (MSE) on the other half of the data. We also compare the total number of selected variables as a sparsity measure. All variables were standardized. For all models, we used 11 equally spaced mixing parameters $\alpha$ ranging from zero to one. For the sparse group boosting based on $\lambda$ and the sparse group lasso, we chose 10 values for $\lambda$. For the sparse group lasso, we used a 5-fold cross-validation with the function 'cvSGL' from the R package 'SGL' which determines its own values for $\lambda$. Since no proven method of selecting a good set of $\lambda$ values in the sparse group boosting exists yet, we chose $\lambda = 50 \cdot i$ for $i \in \{1,...,10\}$, as in boosting ridge regression in general bigger values for $\lambda$ are generally preferable \cite{tutz_boosting_2007}. For the boosting models, we used a learning rate of 0.01 and 2000 boosting iterations to fit the models with early stopping derived from a 5-fold cross-validation. Since the sparse group boosting using the degrees of freedom has no comparable tuning parameter for $\lambda$ in the other two models, we used a finer grid of $\alpha$ values. For a given alpha value $\alpha$ in the sparse group lasso, whenever the model is fitted for $\lambda_i$, the sparse group boosting with the degrees of freedom is fitted with $\alpha+0.01\cdot (i-1)$. This way, for each $\alpha$, 10 versions of each of the three models are being fitted. We always chose the model with the lowest MSE for each $\alpha$ evaluated on the training data and plotted the results in Figure \ref{figure:empowerment}. We see that for all $\alpha$ values both versions of the sparse group boosting are competitive comparing the MSE and yield a sparser set of selected variables at the same time. However, in this dataset, it looks like the utilization of group structure decreases the predictive power. the lasso outperformed the group lasso and sparse group lasso, and the MSE of both sparse group boosting versions is lower for $\alpha \geq 0.6$ compared to smaller values.

\begin{figure}[!ht]
        \centering
        \includegraphics[scale=0.45]{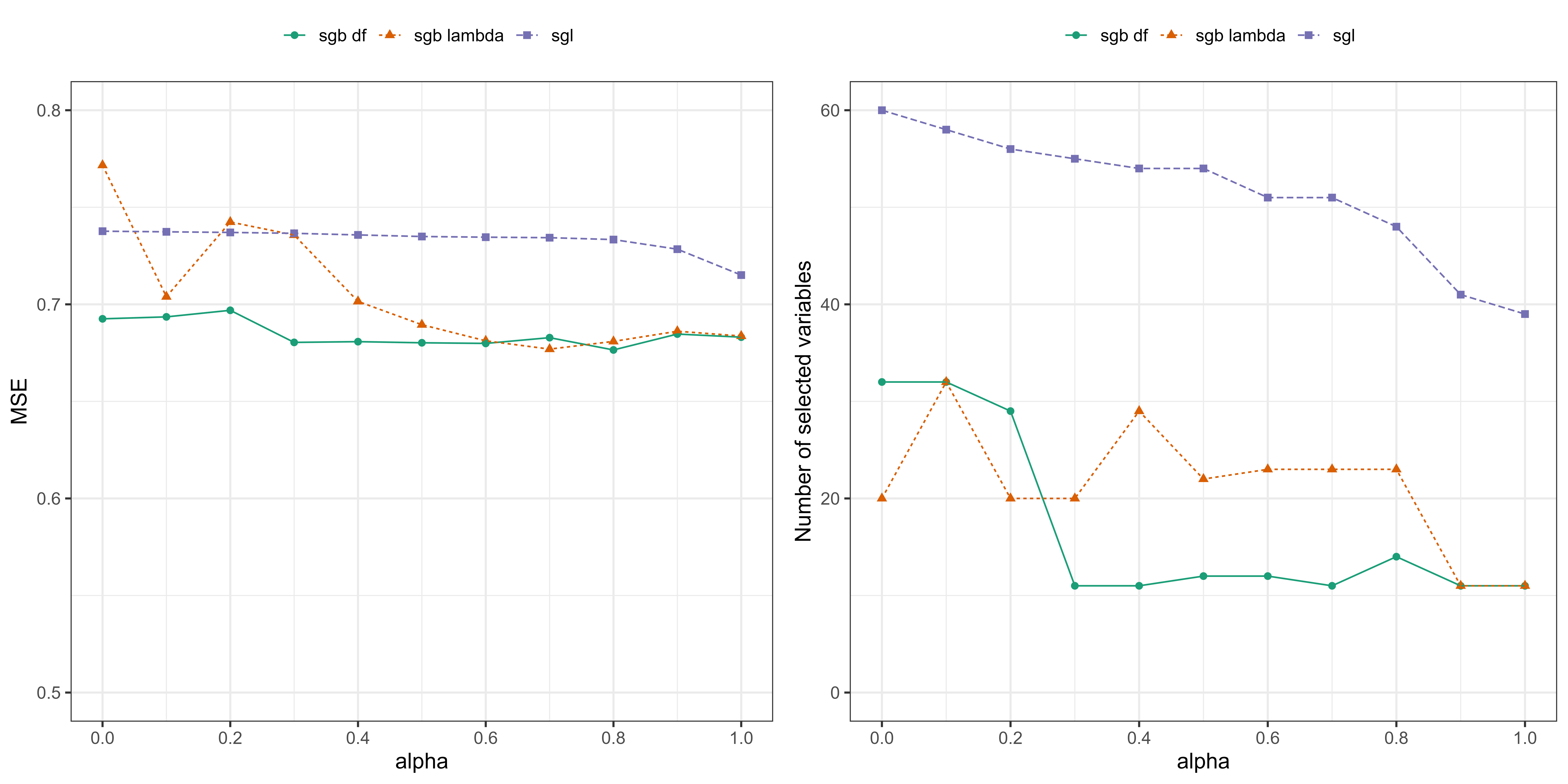}
        \caption{Out of sample MSE (left) and the number of selected variables out of the 60 variables in the dataset (right) for various mixing parameters alpha on the x-axis. Line-type and point-shape indicate the model Sparse group boosting mixing the degrees of freedom (sgb df), sparse group boosting mixing the ridge regularization parameter (sgb lambda), and the sparse group lasso (sgl).}\label{figure:empowerment}
\end{figure}
\FloatBarrier
\section{Further simulation results}

\FloatBarrier

\bigskip
\begin{center}
\begin{figure}[!ht]
        \centering
        \includegraphics[scale=0.48]{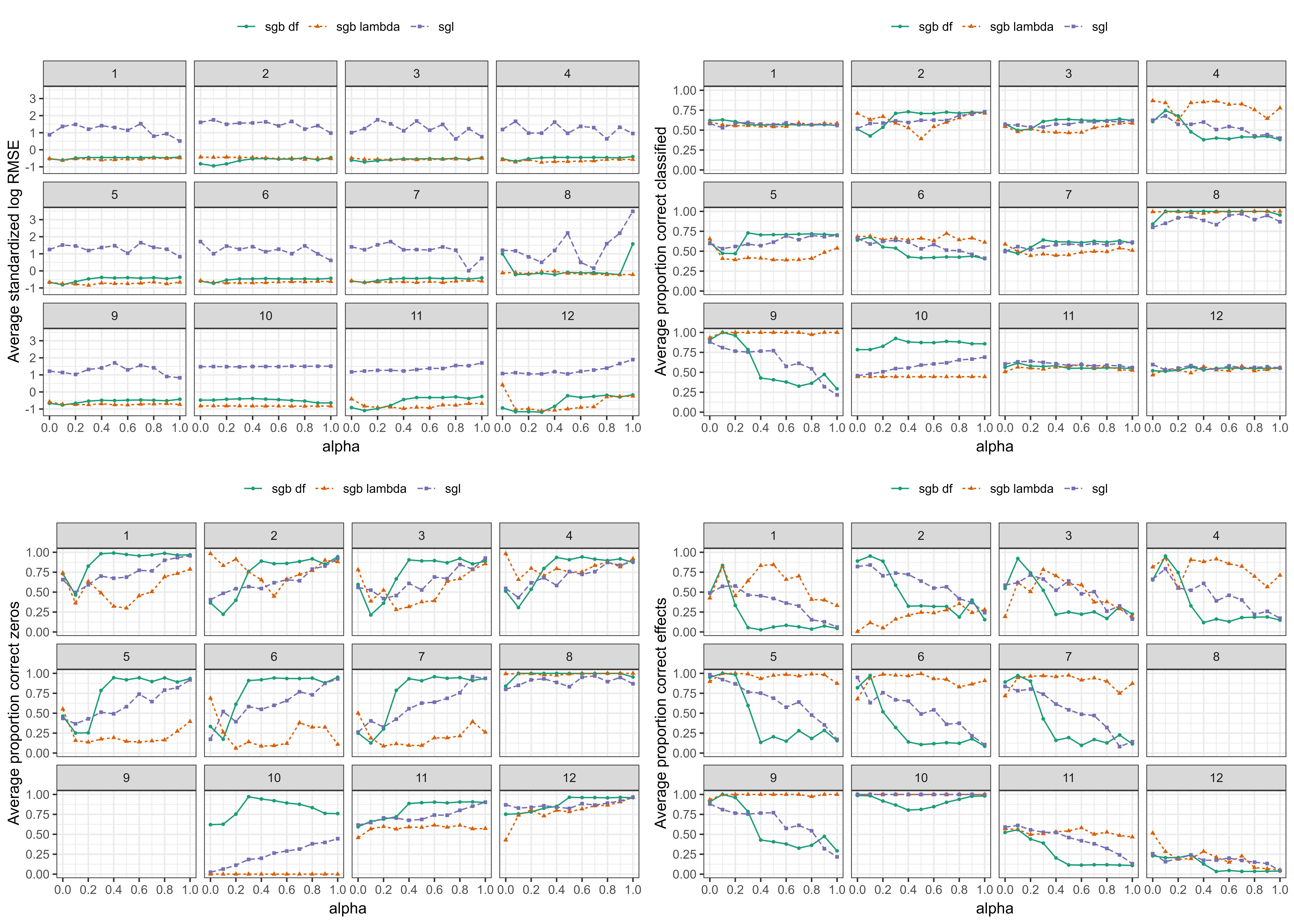}
        \caption{Simulation results for the 12 simulated scenarios averaged across the 15 iterations and 10 hyperparameter setting for each alpha. Colour indicates the type of model. All metrics compare the model estimates with the true parameter vector. Sparse group lasso fitted via the R-package 'SGL'}\label{figure:sim_res_sgl}
\end{figure}

\end{center}

\FloatBarrier

\bigskip
\end{document}